\documentclass{article}

\usepackage{arxiv}

\usepackage[utf8]{inputenc} 
\usepackage[T1]{fontenc}    
\usepackage{hyperref}       
\usepackage{url}            
\usepackage{booktabs}       
\usepackage{amsfonts}       
\usepackage{nicefrac}       
\usepackage{microtype}      
\usepackage{lipsum}
\usepackage{lipsum}
\usepackage{amsmath} 
\usepackage{graphicx}
\usepackage{amsthm}
\usepackage{multirow}
\usepackage{amssymb} 
\graphicspath{ {./images/} }
\usepackage{color}
\usepackage{tikz}
\usetikzlibrary{arrows.meta,decorations.pathreplacing,positioning,calc}
\usepackage{pgfplots}
\pgfplotsset{compat=1.18}
\usepackage{float}
\usetikzlibrary{intersections} 
\usepackage{indentfirst}

\newtheorem{theorem}{Theorem}
\newtheorem{proposition}{Proposition}

\newtheorem{remark}{Remark}
\graphicspath{{figuras/},{logos/}}

\title{Impulsive Release Strategies for Wolbachia-Infected Mosquitoes under Temperature Induced Infection Loss}

\author{
   Jéssica C.S. Alves\textsuperscript{1}, \\
  University of São Paulo\\ Institute of Mathematics and Statistics\\ Department of Applied Mathematics\\ São Paulo, SP, 05508-090, Brazil.\\
  \texttt{alvesj@ime.usp.br} \\
   \And
  Christian E. Schaerer \\
  National University of Asunción\\ Polytechnic School\\ Campus UNA\\ San Lorenzo, Central, P.O. Box 2111 SL, Paraguay.\\
  \texttt{cschaer@pol.una.py} \\
  \And
 Claudia P. Ferreira \\
 São Paulo State University\\
 Institute of Biosciences \\ 
 Botucatu, SP, 18618-689, Brazil\\
    \texttt{claudia.pio@unesp.br}}
\begin{document}
\maketitle
\begin{abstract}
The release of \textit{Wolbachia}-infected mosquitoes is a promising strategy for controlling \textit{Aedes aegypti} populations, but exposure to high temperatures can induce temporary infection loss and compromise long-term persistence. In this work, we propose a population-dynamics model based on impulsive differential equations to describe the interaction between wild and infected mosquitoes, incorporating cytoplasmic incompatibility, periodic release interventions, and temperature-driven infection loss. Analytical threshold conditions are derived to characterize the existence and stability of periodic solutions associated with successful \textit{Wolbachia} establishment. Numerical simulations illustrate the theoretical results and enable a comparative analysis of the wMelPop, wMel, and wAlbB strains, highlighting how differences in thermal tolerance and fitness costs influence persistence after the release phase. The results emphasize the importance of accounting for environmental stress and impulsive interventions when designing effective and robust \textit{Wolbachia} release strategies.
\end{abstract}

\keywords{Impulsive system \and \textit{Wolbachia}\and Release amount \and Loss of infection \and High temperature\and Population dynamics\and Biological control}

\section{Introduction}
\footnotetext[1]{Corresponding author: alvesj@ime.usp.br.}
Mosquito population control is a common strategy for reducing the transmission of vector-borne diseases. In this work, we specifically consider the dynamics of \textit{Aedes aegypti}, the vector of several arboviruses, including dengue, Zika, chikungunya, and yellow fever \cite{ahebwa2023aedes1,zardini2024aedes3}. This species is now present in most tropical and subtropical regions and has the potential to establish itself in more than 80\% of countries across the globe \cite{cheong2025aedes4, lim2025overaedes5}. Current control methods include the use of insecticides, mechanical elimination of breeding sites, and biological control approaches  \cite{ dorigatti2018usingA5, lees2023insecticidesI1,onen2023mosquito}.

Biological control comprises several techniques, including the use of natural predators, parasitoids, and specific pathogens that affect mosquito development and survival. Among the most common approaches is the introduction of natural predators, such as certain fish species, including \textit{Gambusia affinis}, which consume mosquito larvae in aquatic environments \cite{steven2021predator}. Another strategy involves predatory mosquitoes and parasitoid insects: larvae of mosquitoes of the genus \textit{Toxorhynchites} prey on other mosquito larvae \cite{Toxorhynchites}, while parasitoid wasps, such as \textit{Trichogramma}, oviposit on mosquito eggs, contributing to population suppression \cite{parra2022insect}.
Additionally, insecticides remain an important component of mosquito control. However, large-scale application raises environmental concerns, including the evolution of insecticide resistance, which has been reported in several countries and compromises the effectiveness of mosquito-borne disease control efforts \cite{carson2023overcomingI3}.

In addition to these biological and chemical approaches, genetic control techniques, such as the Sterile Insect Technique (SIT), have been widely applied in the management of agricultural pests \cite{zheng2019incompatible} and adapted to mosquito sterilization \cite{almeida2022optimalSterile1,bliman2019implementationSterile2,huang2017modellingSterile7,huang2021impulsiveSterile5,li2020impulsiveSterile6}. Finally, an emerging biological control strategy involves the release of mosquitoes carrying \textit{Wolbachia}, a symbiotic bacterium that reduces mosquito vector competence and limits arbovirus transmission without requiring genetic modification of the mosquito population \cite{dorigatti2018usingA5,ross2021designing,pinto2021effectiveness,dos2022estimatingW1,zara2016estrategiasW2}.

The interaction between wild mosquitoes and those infected with \textit{Wolbachia} or genetically modified mosquitoes has been extensively investigated in the literature \cite{almeida2022optimalSterile1, bliman2019implementationSterile2, gonzales2025optimization, orozco2024comparing, almeida2019mosquitoW5, cardona2020WolbachiaW7, lopes2023exploringW4, yu2019modelingW8}. Most of this literature relies on mathematical modeling to analyze population dynamics, employing systems of ordinary differential equations \cite{almeida2022optimalSterile1, almeida2019mosquitoW5, campo2017optimalW10, dos2023establishingW11}, delay differential equations \cite{lopes2023exploringW4, benedito2020modelingD1, li2020impulsiveD2}, and discrete-time formulations \cite{blackwood2018cascadeA3, ufuktepe2022discreteA4, yu2019modelingDiscrete1}, among others. In addition, optimal control frameworks have been widely adopted to design efficient intervention strategies \cite{ almeida2022optimalSterile1, gonzales2025optimization, campo2017optimalW10,campo2018optimalW9, almeida2019optimal}. These modeling approaches aim to reduce the wild mosquito population and, consequently, mitigate disease transmission.

While much of the existing work focuses on continuous release strategies of \textit{Wolbachia}-infected mosquitoes \cite{cardona2020WolbachiaW7,lopes2023exploringW4,dos2023establishingW11, ogunlade2020modelingC1}, alternative formulations based on impulsive releases have also been proposed \cite{gonzales2025optimization, almeida2022vectorImpulsive1, li2024modelingImpulsive3, liu2023analysisImpulsive2, dianavinnarasi2021controllingImpulsive4}. Periodic and impulsive release strategies appear more frequently in the context of sterile insect techniques \cite{ huang2021impulsiveSterile5, li2020impulsiveSterile6} and in applications to other insect populations and pest control problems \cite{liu2023analysisImpulsive2, ALVES2026116517, pei2018optimizingPests1, wang2011analysisPests2}. These approaches are designed to optimize the timing and magnitude of releases, enhancing control efficiency and improving the feasibility of real-world interventions.

 This work specifically addresses the control of \textit{Aedes aegypti}, with a particular focus on the use of the endosymbiotic bacterium \textit{Wolbachia} as a biological control strategy. Several experimental and field studies have shown that the release of \textit{Wolbachia}-infected mosquitoes can substantially reduce vector competence and, consequently, the transmission of arboviruses \cite{ross2021designing,dos2022estimatingW1,zara2016estrategiasW2}. These findings support the use of \textit{Wolbachia} as an effective and potentially sustainable tool for the long-term control of \textit{Aedes aegypti} populations.

In this article, we model the interaction between wild \textit{Aedes aegypti} mosquitoes and mosquitoes carrying the \textit{Wolbachia} bacterium, incorporating the mechanism of cytoplasmic incompatibility, which reduces the reproductive success of uninfected females when mating with infected males \cite{kaur2024mechanism}. The model also includes two impulsive events: one representing the loss of \textit{Wolbachia} infection due to high-temperature exposure, and the other corresponding to the periodic release of \textit{Wolbachia}-infected mosquitoes. Within this framework of impulsive differential equations, we analyze the system dynamics to determine conditions that ensure the fixation of \textit{Wolbachia}-carrying mosquitoes in the population.

Field and laboratory evidence indicate that high temperatures can temporarily reduce \textit{Wolbachia} density and weaken cytoplasmic incompatibility in \textit{Aedes aegypti}, particularly in environments subject to intense thermal fluctuations \cite{ ross2017Wolbachia, gunasekaran2022sensitivity}. For the wMel strain, heatwaves have been associated with decreases in infection frequencies, with values falling to 83\% in larvae sampled directly from field habitats and to 88\% in eggs collected from ovitraps, before recovering toward baseline levels in subsequent months \cite{ross2020heatwaves}. Experimental studies also report drastic reductions in \textit{Wolbachia} density when immature stages experience elevated temperatures, with levels dropping to below 0.1\% of controls after seven days of heat exposure beginning at the egg stage, followed by partial recovery once temperatures return to milder ranges \cite{gunasekaran2022sensitivity,ulrich2016heat}. These observations motivate the inclusion in our model of a redistributive impulsive event representing episodic loss of infection due to high-temperature exposure, capturing sudden reductions in the proportion of infected mosquitoes, adding this proportion to wild mosquitoes, and their gradual reestablishment over time.

Temperature effects, however, are strain-dependent, with direct implications for the magnitude and frequency of the impulsive loss term in the proposed model. The wMel strain, while capable of blocking dengue and Zika transmission, shows sensitivity to elevated temperatures and exhibits reduced stability under heat stress \cite{ross2017Wolbachia,ulrich2016heat, ant2018wolbachia}. The wMelPop strain is even more susceptible, displaying pronounced fitness costs and sharp declines in density when exposed to high temperatures \cite{ross2019loss,ross2017Wolbachia}, suggesting a larger and more frequent infection-loss impulse. In contrast, the wAlbB strain demonstrates greater tolerance to thermal stress, maintaining comparatively higher densities under fluctuating or elevated temperatures \cite{gunasekaran2022sensitivity, mancini2021high}, implying a weaker impulsive reduction in infection levels. Since both pathogen-blocking capacity and thermal resilience influence persistence and replacement dynamics, we summarize these characteristics in Table \ref{tab:tab_1} to support the interpretation of the model parameters and their biological relevance.
\begin{table}[H]
\centering
{\footnotesize
\caption{Key biological traits of \textit{Wolbachia} strains: virus blocking, heat tolerance, and fitness effects.}\label{tab:tab_1}
\begin{tabular}{lccp{3.7cm}c}
\hline
\textbf{Strain} & \textbf{Virus block} & \textbf{Heat tolerance} & \textbf{Impact on host fitness} & \textbf{Ref.} \\
\hline
wMelPop & Very high & Low & Strong reduction in fecundity, egg viability, and longevity & \cite{ross2019loss,ross2017Wolbachia,ritchie2015application} \\
wMel    & Moderate  & Low & Moderate reduction in fecundity and egg viability; little effect on longevity & \cite{ulrich2016heat,mancini2021high, maciel2024Wolbachia} \\
wAlbB   & High      & Moderate/High & Mild effects; fertility may drop under heat stress & \cite{gunasekaran2022sensitivity,ant2018wolbachia,axford2016fitness} \\
\hline
\end{tabular}}
\end{table}

To address this problem, we formulate a population dynamics model governed by impulsive differential equations that incorporates two distinct impulsive mechanisms: the periodic release of \textit{Wolbachia}-infected mosquitoes and the temperature-driven loss of infection. This modeling approach extends existing impulsive frameworks by coupling operational interventions with environmentally induced perturbations. We establish analytical conditions for the existence, characterization, and stability of periodic solutions, and derive explicit threshold conditions that link the magnitude and timing of release events to the intensity and frequency of thermal infection loss. These results provide theoretical criteria guaranteeing long-term persistence of \textit{Wolbachia} infection in the mosquito population.

In addition to the theoretical analysis, we perform numerical simulations to validate the derived threshold conditions and to explore how episodic high-temperature events influence infection dynamics. The simulations illustrate the interplay between release schedules, environmental stress, and strain-specific biological traits, and identify scenarios in which maintaining infection persistence may require adjustments in release intensity or frequency.

One of the main contributions of this work is the derivation of an explicit analytical threshold for the number of \textit{Wolbachia}-infected mosquitoes that must be released at each impulsive intervention time 
$\tau$, based on the proposed impulsive population model. This threshold guarantees the successful establishment of the infected mosquito population and the elimination of the wild population in the absence of temperature-induced infection loss. Furthermore, when episodic infection loss due to high temperatures is incorporated into the model, the obtained condition ensures that the wild mosquito population can be maintained arbitrarily close to zero, provided that the release magnitude satisfies the derived threshold. These results provide a rigorous theoretical basis for designing impulsive release strategies capable of sustaining \textit{Wolbachia} infection under environmentally driven perturbations.

The remainder of this paper is organized as follows. Section 2 presents the interaction model between wild and \textit{Wolbachia}-infected \textit{Aedes aegypti} mosquitoes, establishing the fundamental properties of the solutions, including existence, uniqueness, positivity, boundedness, and the local stability of equilibrium states. Section 3 introduces the impulsive population-dynamic model and provides the theoretical analysis of the coupled impulsive system, where both release interventions and temperature-induced infection loss are formally described. In this section, we derive threshold conditions ensuring long-term infection persistence. Section 4 presents numerical simulations that support the analytical findings and assess the impact of episodic high-temperature events on infection dynamics. Finally, Section 5 summarizes the main conclusions and discusses implications for the design of impulsive release programs.


\section{Model Formulation}
We propose the following mathematical model to characterize the population dynamics of \textit{Aedes aegypti} mosquitoes, distinguishing between the wild population and the one infected with \textit{Wolbachia}:
\begin{equation}\label{eq:equation_1}
    \begin{cases}
        \dfrac{dS_1}{dt} = S_1\psi_1\left(1 - \dfrac{S_1 + S_2}{K} \right) \left(\dfrac{S_1+(1-\gamma)S_2}{S_1+S_2} \right) - \delta_1 S_1, \\[6pt]
        \dfrac{dS_2}{dt} = S_2\psi_2 \left(1 - \dfrac{S_1 + S_2}{K} \right) - \delta_2 S_2.
    \end{cases}
\end{equation}
with non-negative initial conditions and positive parameters, $S_1(t)$ represents the population of wild \textit{Aedes aegypti} mosquitoes, while $S_2(t)$ represents the population of \textit{Wolbachia}-infected mosquitoes over time $t$. The parameters $\psi_i$ and $\delta_i$ denote the birth and death rates of $S_1$ and $S_2$, respectively, for $i = 1,2$, whereas $\gamma \in [0,1]$ is the cytoplasmic incompatibility parameter. The parameter $K$ corresponds to the shared environmental carrying capacity of wild and infected populations.

The model in \eqref{eq:equation_1} was constructed to capture the main biological mechanisms governing the interaction between wild and \textit{Wolbachia}-infected \textit{Aedes aegypti} populations. The logistic-type terms reflect the limitation imposed by resource competition, assuming that both populations share the same environmental carrying capacity $K$. This shared constraint expresses the ecological competition for breeding sites and food resources, which are finite within the habitat.

The multiplicative structure of the growth terms ensures that population change is proportional to the current population size, following standard formulations in population dynamics. The additional factor
\[
\left(\frac{S_1+(1-\gamma)S_2}{S_1+S_2}\right)
\]
in the equation for $S_1$ represents the cytoplasmic incompatibility (CI) effect: when \textit{Wolbachia}-infected males mate with wild females, a fraction $\gamma$ of the eggs fail to develop \cite{kaur2024mechanism}. Thus, $(1-\gamma)$ quantifies the viable offspring fraction produced in mixed matings, reducing the reproductive success of the wild population.

The infected population, on the other hand, does not experience this reduction, as \textit{Wolbachia} transmission is assumed to be maternal and, in this case, fully efficient. Therefore, the growth of $S_2$ is described by a simpler logistic-like term, driven by $\psi_2$ and limited by the same carrying capacity $K$. Maternal transmission of \textit{Wolbachia} is assumed to be perfect, and therefore, vertical transmission failure is not considered. Possible mechanisms leading to infection loss are introduced and discussed in the following section.

For this model, we consider the following assumptions:
\begin{equation}\label{eq:equation_2}
    \psi_1 > \delta_1 \quad \text{and} \quad \psi_2 > \delta_2,
\end{equation}
which guarantees that birth rates exceed death rates, allowing population persistence in the absence of competition. Additionally, we assume
\begin{equation}\label{eq:equation_3}
    \psi_2 < \psi_1, \quad \text{and} \quad \delta_2 > \delta_1,
\end{equation}
implying that wild mosquitoes have greater fitness than \textit{Wolbachia}-infected mosquitoes. This assumption is biologically realistic, as infection typically imposes a fitness cost on mosquitoes, reducing reproductive rates and increasing mortality. However, depending on the \textit{Wolbachia} strain, these effects may vary significantly: some strains cause substantial reductions in mosquito fitness, while others exhibit milder impacts or even confer slight advantages under specific environmental conditions.

The model formulation thus reflects a balance between ecological competition and cytoplasmic incompatibility. The infection’s potential to replace the wild population depends on the interplay between the fitness cost and the reproductive advantage conferred by CI. Under suitable conditions, the system in \eqref{eq:equation_1} can exhibit coexistence or complete replacement of the wild population by the infected one, which will be explored in the subsequent analysis.

\begin{theorem}[Existence, Uniqueness, Positivity, and Global Boundedness]\label{thm:existence_uniqueness}
Consider the system \eqref{eq:equation_1}, with non-negative initial conditions, and assume that all parameters $\psi_i$, $\delta_i$, $K$, and $\gamma$ are positive constants. Then, the following statements hold:
\begin{enumerate}
    \item[I.] There exists a unique local solution $(S_1(t), S_2(t))$ to system \eqref{eq:equation_1} for any given nonnegative initial condition $(S_1(0), S_2(0))$.
    
    \item[II.] The positive orthant $\mathbb{R}^2_+ = \{(S_1,S_2) \,|\, S_1 \ge 0,\, S_2 \ge 0\}$ is invariant under the flow of \eqref{eq:equation_1}. In particular, if $S_1(0), S_2(0) \ge 0$, then $S_1(t), S_2(t) \ge 0$ for all $t \ge 0$.
    
    \item[III.] The solutions of \eqref{eq:equation_1} are uniformly bounded for all $t \ge 0$.
    
    \item[IV.] Consequently, for any nonnegative initial condition, there exists a unique, positive, and bounded solution $(S_1(t), S_2(t))$ of system \eqref{eq:equation_1} defined for all $t \ge 0$.
    
\end{enumerate}
\end{theorem}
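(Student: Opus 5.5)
The plan is to treat the vector field of \eqref{eq:equation_1} on $\mathbb{R}^2_+\setminus\{(0,0)\}$ first and then deal with the origin separately. The only non-polynomial factor is the cytoplasmic-incompatibility ratio
\[
\phi(S_1,S_2)=\frac{S_1+(1-\gamma)S_2}{S_1+S_2},
\]
which is smooth away from the origin. It satisfies $1-\gamma\le\phi\le 1$ on the open orthant, since $\gamma\in[0,1]$.

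\textbf{Step I (local existence and uniqueness).} Away from the origin the right-hand side is $C^1$, so Picard--Lindel\"of gives a unique local solution. At the origin we extend the first equation by defining $S_1\phi:=0$ at $(0,0)$. Near the origin we have $|S_1\phi|\le S_1$, and
\[
S_1\phi = S_1-\gamma\,\frac{S_1S_2}{S_1+S_2}.
\]
The map $(S_1,S_2)\mapsto S_1S_2/(S_1+S_2)$ is positively homogeneous of degree one with bounded gradient on the orthant minus the origin. It is therefore globally Lipschitz on $\mathbb{R}^2_+$, and the extended field is locally Lipschitz on all of $\mathbb{R}^2_+$.

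I expect this point to be the main technical obstacle: the Lipschitz extension of $S_1\phi$ at $(0,0)$ must be verified directly, by bounding the gradient $\bigl(S_2^2,S_1^2\bigr)/(S_1+S_2)^2$ by $1$. Once that is done, the origin is an equilibrium and uniqueness holds everywhere. (To apply Picard on a neighborhood, one can extend the field to $\mathbb{R}^2$, for instance using $|S_i|$ in the ratio.)

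\textbf{Step II (invariance of the orthant).} Write each equation in the form $\dot S_i=S_i\,g_i(t)$, where $g_i$ is continuous along the solution. For instance,
\[
g_2=\psi_2\Bigl(1-\frac{S_1+S_2}{K}\Bigr)-\delta_2.
\]
Then $S_i(t)=S_i(0)\exp\!\bigl(\int_0^t g_i\bigr)\ge 0$, and the axes $\{S_1=0\}$ and $\{S_2=0\}$ are invariant by uniqueness.

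\textbf{Step III (uniform boundedness).} Set $N=S_1+S_2$. Using $\phi\le 1$ and the non-negativity from Step II, whenever $N\ge K$ both logistic factors are $\le 0$, so $\dot N\le -\delta_1S_1-\delta_2S_2<0$. Hence
\[
N(t)\le\max\{N(0),K\}\quad\text{for all } t\ge 0.
\]
A sharper statement also holds: $\dot N\le \psi_1 N(1-N/K)$ when $N\le K$, and $\dot N\le -\min\{\delta_1,\delta_2\}N$ when $N>K$, so $\limsup N\le K$ by comparison.

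\textbf{Step IV (global existence).} The solution stays in a compact subset of $\mathbb{R}^2_+$. By the standard continuation theorem it cannot blow up in finite time and therefore exists for all $t\ge 0$. Combining this with Steps I--III gives a unique, non-negative, bounded global solution.
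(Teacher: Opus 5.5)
Your proposal is correct. Its skeleton (Picard--Lindel\"of, invariance of the axes, a bound on $N=S_1+S_2$, continuation) is the paper's, but two steps take a genuinely different and more careful route.

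\textbf{The origin.} The paper works only on $D=\{S_1+S_2>0\}$. In a remark it extends the ratio continuously and asserts that uniqueness is unaffected. Your splitting $S_1\phi=S_1-\gamma S_1S_2/(S_1+S_2)$, with the gradient bound $(S_2^2,S_1^2)/(S_1+S_2)^2$, actually proves the field is locally Lipschitz on the closed orthant. So uniqueness through $(0,0)$ is justified rather than asserted. This is also what makes $g_i$ continuous along nontrivial trajectories in your Step II, whose exponential representation is a cleaner version of the paper's ``tangent or inward on the axes'' argument.

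\textbf{Boundedness.} The paper compares $S$ with $\dot y=\psi y(1-y/K)-\delta y$, where $\psi=\max\psi_i$ and $\delta=\min\delta_i$. That inequality holds for $S\le K$ but can fail for $S>K$: there $1-S/K<0$, so replacing $\phi$ by $1$ and $\psi_i$ by $\psi$ decreases the right-hand side rather than increasing it. The paper's claim $y(t)<y^*$ is also false when $y(0)>y^*$. Your sign argument for $N\ge K$ sidesteps both problems and gives $N(t)\le\max\{N(0),K\}$ without using $\psi_i>\delta_i$. What the paper's comparison buys, once $S\le K$, is the finer asymptotic bound $\limsup S\le K(1-\delta/\psi)$.

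\textbf{One small slip.} Your aside $\dot N\le\psi_1N(1-N/K)$ for $N\le K$ uses $\psi_2\le\psi_1$. That is assumption \eqref{eq:equation_3}, not a hypothesis of this theorem. Either use $\max\{\psi_1,\psi_2\}$ or drop the aside, since $\dot N<0$ for $N\ge K$ already gives $\limsup N\le K$.
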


The proof of Theorem \ref{thm:existence_uniqueness} is given in the Appendix \ref{appendix:A}.

Having established the existence, uniqueness, and boundedness of nonnegative solutions for system~\eqref{eq:equation_1}, we now proceed to investigate its equilibrium points and their stability properties. This analysis provides essential insights into the long-term behavior of the interacting populations, allowing us to identify possible coexistence or extinction scenarios for the wild and \textit{Wolbachia}-infected mosquito populations under the given ecological assumptions. Moreover, we will show that whenever the coexistence equilibrium exists, it corresponds to a saddle point and is therefore unstable.

\begin{theorem}[Equilibrium Points and Stability]\label{thm:thm_1}
Consider system~\eqref{eq:equation_1} with non-negative initial conditions.  
Under assumptions~\eqref{eq:equation_2} and~\eqref{eq:equation_3}, define
\[
R_1 = \frac{\psi_1}{\delta_1}
\quad \text{and} \quad
R_2 = \frac{\psi_2}{\delta_2}.
\]
The system may exhibit up to three equilibrium points in the region $\mathbb{R}^2_+ \setminus \{(0,0)\}$, whose existence and local stability properties are described below:

\begin{itemize}
    \item \textbf{\textit{Wolbachia}-free equilibrium $E_1$}: 
    \begin{equation*}
        E_1 = \left(K\left(1 - \frac{1}{R_1}\right),\, 0\right),
    \end{equation*}
    which exists if $R_1 > 1$ and is locally asymptotically stable if
    \begin{equation*}
        R_1 > R_2.
    \end{equation*}

    \item \textbf{Wild-free equilibrium $E_2$}: 
    \begin{equation*}
        E_2 = \left(0,\, K\left(1 - \frac{1}{R_2}\right)\right),
    \end{equation*}
    which exists if $R_2 > 1$ and is locally asymptotically stable if
    \begin{equation*}
        R_2 > (1-\gamma)R_1.
    \end{equation*}

    \item \textbf{Coexistence equilibrium $E_3$}: 
    \begin{equation*}
        E_3 = (\hat{S}_1,\, \hat{S}_2),
    \end{equation*}
    where
    \begin{align*}
        \hat{S}_1 &= \left(1 - \frac{1}{R_2} \right)\frac{K}{\gamma} 
        \left( \frac{R_1}{R_2} - (1 - \gamma) \right),\\[4pt]
        \hat{S}_2 &= \left(1 - \frac{1}{R_2} \right)K - \hat{S}_1.
    \end{align*}
    The equilibrium $E_3$ exists if and only if
    \begin{equation*}
        1 - \gamma < \frac{R_1}{R_2} < 1,
    \end{equation*}
    and, whenever it exists, it is a saddle point (thus unstable).
\end{itemize}
\end{theorem}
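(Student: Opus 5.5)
The plan is to find the equilibria by setting the right-hand sides of \eqref{eq:equation_1} to zero and factoring, and then to determine local stability from the Jacobian.

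Write $N=S_1+S_2$. On $S_2=0$ the $S_1$-equation becomes $\psi_1(1-S_1/K)=\delta_1$, which gives $E_1$; it lies in the positive orthant exactly when $R_1>1$. On $S_1=0$ the CI factor does not enter the $S_2$-equation, and $\psi_2(1-S_2/K)=\delta_2$ gives $E_2$, which exists when $R_2>1$. For an interior equilibrium, the $S_2$-equation forces $1-N/K=1/R_2$, so $N=K(1-1/R_2)$. Substituting this into the $S_1$-equation gives
\[
\frac{R_1}{R_2}\,\frac{S_1+(1-\gamma)S_2}{N}=1,
\qquad\text{so}\qquad
S_1+(1-\gamma)S_2=\frac{R_2}{R_1}\,N .
\]
Writing $S_2=N-S_1$ yields $\gamma S_1=N\bigl(R_2/R_1-(1-\gamma)\bigr)$. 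I will check this against the formula in the statement, since the statement writes $R_1/R_2$ where my computation produces $R_2/R_1$; I will reconcile the expression before finalizing. Positivity of $\hat S_1$ and $\hat S_2$ then reduces to the double inequality $1-\gamma<\text{ratio}<1$, with $\hat S_2>0$ equivalent to the ratio being below $1$. The condition $R_2>1$ is implicit here, because $N>0$.

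For stability I will compute the Jacobian of \eqref{eq:equation_1}.

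At $E_1$, the $S_2$-row has zero off-diagonal entry, so the Jacobian is triangular. Its eigenvalues are:
\begin{itemize}
\item $-(\psi_1-\delta_1)$, from the logistic part, which is negative by \eqref{eq:equation_2};
\item $\psi_2/R_1-\delta_2$, which is negative if and only if $R_1>R_2$.
\end{itemize}

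At $E_2$ the Jacobian is again triangular, now because $\partial \dot S_1/\partial S_2=0$ when $S_1=0$. I will be careful there: the CI factor is $(S_1+(1-\gamma)S_2)/(S_1+S_2)$, which is not smooth at $(0,0)$ but is smooth near $E_2$, where it equals $1-\gamma$. The eigenvalues are:
\begin{itemize}
\item $-(\psi_2-\delta_2)<0$;
\item $\psi_1(1-\gamma)/R_2-\delta_1$, which is negative if and only if $R_2>(1-\gamma)R_1$.
\end{itemize}

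The main obstacle is showing that $E_3$ is a saddle. My plan is to use $\dot S_2=S_2 g(N)$ with $g(N)=\psi_2(1-N/K)-\delta_2$, and $\dot S_1=S_1 f(S_1,S_2)$. At $E_3$ we have $f=g=0$, so
\[
J=\begin{pmatrix}\hat S_1 f_{S_1} & \hat S_1 f_{S_2}\\ \hat S_2 g' & \hat S_2 g'\end{pmatrix},
\qquad
\det J=\hat S_1\hat S_2\, g'\,(f_{S_1}-f_{S_2}).
\]
Here $g'=-\psi_2/K<0$. The logistic parts of $f$ contribute equally to $f_{S_1}$ and $f_{S_2}$, so $f_{S_1}-f_{S_2}$ reduces to $\psi_1(1-N/K)$ times the difference of the partial derivatives of the CI factor $\phi=(S_1+(1-\gamma)S_2)/N$. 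That difference is
\[
\frac{\gamma S_2+\gamma S_1}{N^2}=\frac{\gamma}{N}>0 .
\]
Hence $\det J<0$ and $E_3$ is a saddle, which completes the proof.
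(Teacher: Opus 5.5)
Your argument is correct. For the existence part and for $E_1$, $E_2$ it is the paper's argument: solve the $S_2$-equation for $N=S_1+S_2$, substitute into the $S_1$-equation, and read the eigenvalues off the triangular Jacobians.

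You should not hedge on the ratio. Your relation $\gamma\hat S_1=N\bigl(R_2/R_1-(1-\gamma)\bigr)$ is right, and the paper's own proof in Appendix~\ref{appendix:B} also obtains $R_2/R_1$. The $R_1/R_2$ in the statement is a typo. Indeed, under \eqref{eq:equation_3} one has $R_2<R_1$, so the printed condition $R_1/R_2<1$ could never hold. With $R_2/R_1$ the upper bound is automatic, and $E_3$ exists exactly when $E_2$ is stable, i.e.\ precisely in the bistable regime where $E_1$ and $E_2$ are both locally stable.

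Where you genuinely depart from the paper is the saddle argument.
\begin{itemize}
\item The paper writes out the four Jacobian entries at $E_3$ in terms of $c=R_2/R_1$. It multiplies out to $\det J(E_3)=(\psi_2-\delta_2)(c-1)(c+\gamma-1)\psi_1\delta_2/(\gamma\psi_2)$, whose sign follows from $1-\gamma<c<1$.
\item You instead exploit the product structure $\dot S_1=S_1 f$, $\dot S_2=S_2\,g(N)$ with $f=g=0$ at $E_3$. This gives $\det J=\hat S_1\hat S_2\,g'(N)\,(f_{S_1}-f_{S_2})$ with $f_{S_1}-f_{S_2}=\psi_1\gamma/(R_2N)>0$. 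Substituting $\hat S_1=N(c+\gamma-1)/\gamma$ and $\hat S_2=N(1-c)/\gamma$ reproduces the paper's formula exactly.
\end{itemize}

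Your route has three advantages. It uses only $\hat S_1,\hat S_2>0$. It is much less error-prone: the paper's displayed entries at $E_3$ contain slips, e.g.\ a stray factor $K$ in the $(2,2)$ entry, even though its final determinant is correct. And it explains the result: the saddle comes from the positive frequency dependence created by cytoplasmic incompatibility, since $\partial_{S_1}-\partial_{S_2}$ applied to the CI factor $(S_1+(1-\gamma)S_2)/N$ gives $\gamma/N>0$.

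The paper's entry-by-entry computation offers in exchange the full Jacobian rather than only its determinant. That would matter if one needed the trace or the eigenvalues themselves, but for the saddle claim only the sign of the determinant is used.
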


The proof of Theorem \ref{thm:thm_1} is given in the Appendix \ref{appendix:B}.

The conditions derived above admit a clear biological interpretation. 
The threshold $R_1>1$ indicates that a wild mosquito must produce, on average, more than one wild mosquito per generation for the wild population to persist. 
Similarly, the condition $R_2>1$ establishes the minimum requirement for the persistence of the Wolbachia-infected mosquito population. 
The inequality $R_1>R_2$ reflects a higher fitness of the wild population compared to the infected population in the absence of cytoplasmic incompatibility. 
Finally, the term $(1-\gamma)R_1$ represents the reduction in the effective fitness of the wild population induced by cytoplasmic incompatibility.

In Figure~\ref{fig:fig_1}, we present the phase portraits of model~\eqref{eq:equation_1} for different initial conditions, illustrating the trajectories of $S_1$ and $S_2$ relative to the equilibrium points. 
The results show that the basin of attraction associated with equilibrium $E_2$, which corresponds to the elimination of wild mosquitoes and the persistence of \textit{Wolbachia}-infected individuals, varies according to the parameter set of each strain. 
In particular, the attraction region for the wMelPop strain is significantly smaller than that observed for the other two strains. 
This behavior is consistent with the biological characteristics of wMelPop, which is known to reduce key mosquito fitness traits, as reported in Table~\ref{tab:tab_1}.
\begin{figure}[!ht]
        \centering
        \includegraphics[width=15cm]{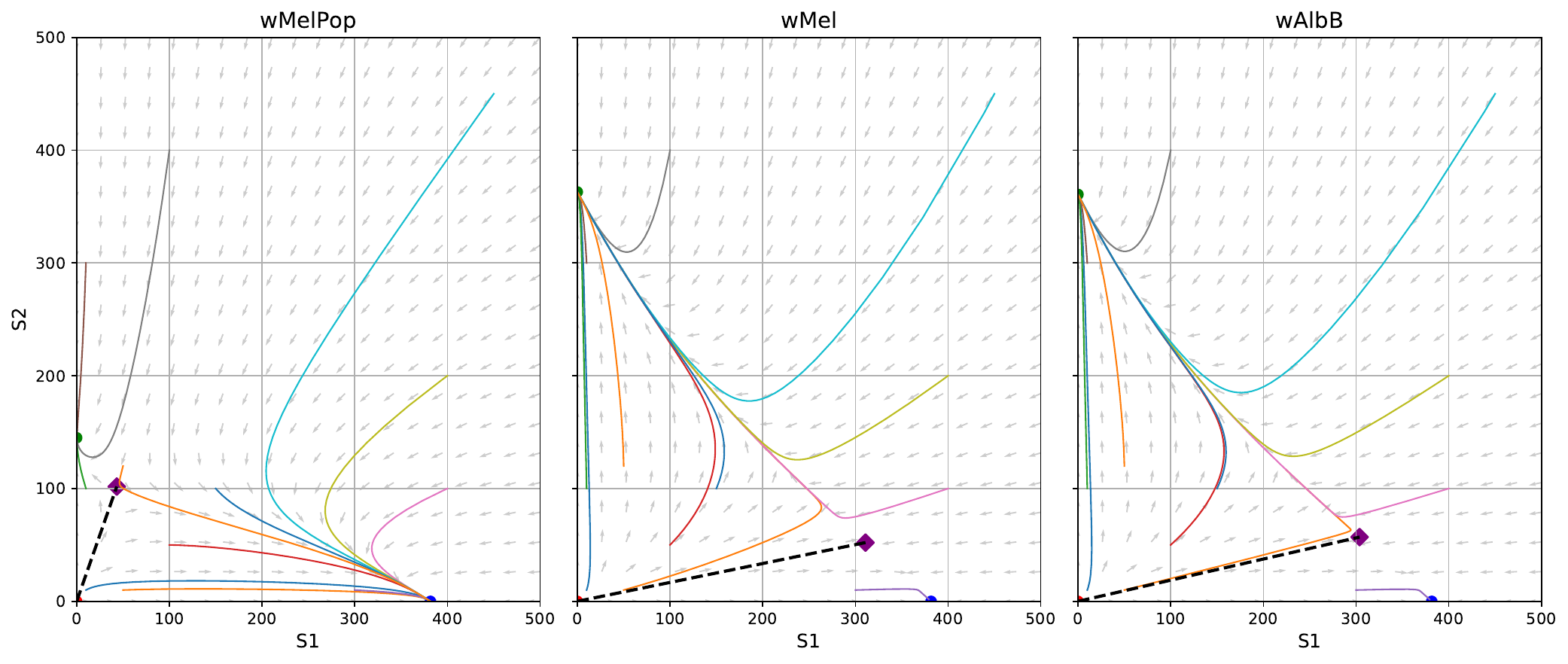}
        \caption{Phase portraits of system~\eqref{eq:equation_1} for $\gamma=0.95$, illustrating the qualitative behavior of trajectories for the three \textit{Wolbachia} strains. 
Different colors represent trajectories starting from distinct initial conditions. 
The equilibria $E_1$ (blue), $E_2$ (green), and $E_3$ (purple) are indicated. 
Parameter values are reported in Tables~\ref{tab:tab_3}--\ref{tab:tab_4}.}
\label{fig:fig_1}
\end{figure}

To conclude this section, we have established the main qualitative properties of the continuous model, including the existence, uniqueness, positivity, and boundedness of solutions, as well as the existence and local stability of its equilibrium points. In addition, the system behavior was illustrated through a parameter-space diagram and phase portraits, which together provide complementary insights into the role of model parameters and initial conditions. These results provide a solid theoretical foundation for extending the model to a more realistic framework that incorporates impulsive effects, which will be addressed in the next section.

\section{Impulsive Model}

Given our objective of studying and formulating release strategies for \textit{Wolbachia}-infected mosquitoes in a specific region, we adopt the framework of impulsive differential equations due to its practical relevance in modeling discontinuous interventions. In this model, we consider two types of periodic impulses: (i) the seasonal loss of infection caused by temperature variations and (ii) the impulsive release of infected mosquitoes.

We consider a fixed period $\tau>0$, corresponding to the time between successive mosquito interventions. First, we model the loss of infection caused by seasonal temperature variation through a redistributive impulse, since \textit{Wolbachia}-infected mosquitoes may lose the infection when exposed to high temperatures.At times $t = (n+l)\tau$, where $l \in [0,1]$ determines the relative timing of infection loss within each release period $\tau$, the seasonal loss event occurs within the release cycle and may coincide with a release when $l = 0$ or $l = 1$.
 The index $n \in \mathbb{Z}_+$ denotes the release cycle and organizes both types of impulses over time. At these moments, the population $S_2$ loses a proportion $\varphi \in [0,1]$ of individuals, which are transferred to the $S_1$ population:
\begin{equation*}
\begin{cases}
S_1(t^+) = S_1(t) + \varphi(t) S_2(t), \\
S_2(t^+) = (1 - \varphi(t)) S_2(t),
\end{cases}
\quad t = (n + l)\tau.
\end{equation*}

The loss rate $\varphi(t)$ is modeled as a smooth periodic function representing the seasonal variation in infection loss driven by temperature fluctuations:
\begin{equation}\label{eq:equation_5}
\varphi(t) = \varphi_{\max} \left[1 + \cos\left(\frac{2\pi (t - t_{\text{peak}})}{365}\right)\right]^2,
\end{equation}
where $\varphi_{\max}$ is related to the maximum seasonal loss rate, $365$ denotes the period of the seasonal cycle (one year in this case), and $t_{\text{peak}}$ represents the time of the year at which the loss is most intense. This formulation yields a smooth and periodic function satisfying $\varphi(t) \in [0, 4\varphi_{\max}]$. To ensure biological consistency, $\varphi_{\max}$ is chosen such that $\varphi(t) \in [0,1]$ for all $t$, which requires $\varphi_{\max} \in (0, \frac{1}{4}]$. Under this choice, the maximum proportion of infected individuals lost due to environmental effects never exceeds the total infected population.
\begin{figure}[h!]
    \centering
    \includegraphics[width=0.6\textwidth]{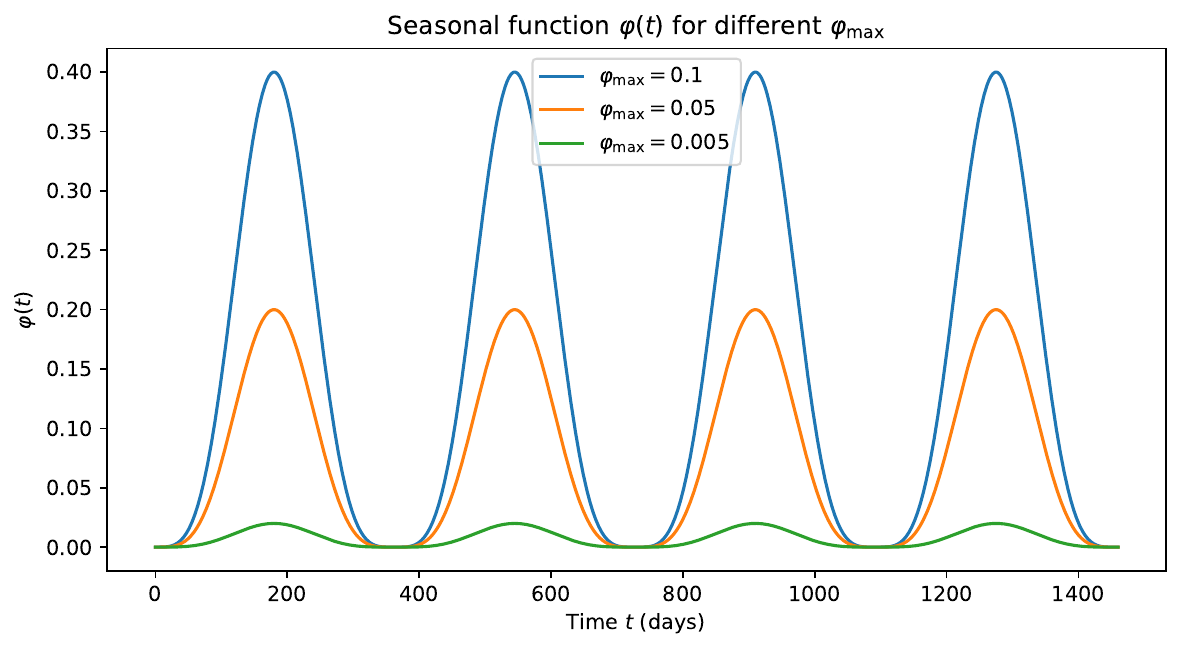}
    \caption{Seasonal function $\varphi(t)$ defined in \eqref{eq:equation_5} plotted for three different values of $\varphi_{\max}$.}
    \label{fig:fig_2}
\end{figure}

At each time instant $t = (n+1)\tau$, \textit{Wolbachia}-infected mosquitoes are released into the system in a quantity $u_n$. These impulsive releases are planned interventions aimed at increasing the proportion of infected individuals and suppressing the wild population, thereby promoting the spread and eventual fixation of the \textit{Wolbachia} infection. The release acts directly on the infected population $S_2$, while the wild population $S_1$ remains unchanged at the moment of intervention:
\begin{equation*}
\begin{cases}
S_1(t^+) = S_1(t), \\
S_2(t^+) = S_2(t) + u_n,
\end{cases}
\quad t = (n + 1)\tau.
\end{equation*}

The long-term dynamics of the system can be influenced by selecting $u_n$, $\tau$, and $\varphi_{max}$ in a way that favors the dominance of the infected population $S_2$ and suppresses the wild population $S_1$, which is particularly relevant for vector-borne disease management. Motivated by this control perspective, the complete model describing the interaction between \textit{Wolbachia}-infected and wild mosquito populations, accounting for both infection loss due to temperature and impulsive releases of infected individuals, is given by the following system of impulsive differential equations:

\begin{subequations}
\begin{align}
&\left.
\begin{cases}\label{eq:equation_6}
\dfrac{dS_1}{dt} = S_1\psi_1\left(1 - \dfrac{S_1 + S_2}{K} \right) \left(\dfrac{S_1+(1-\gamma)S_2}{S_1+S_2}\right) - \delta_1 S_1, \\
\dfrac{dS_2}{dt} = S_2\psi_2 \left(1 - \dfrac{S_1 + S_2}{K} \right) - \delta_2 S_2,
\end{cases}
\right.\hspace{-1cm}\text{if } t \neq (n + l)\tau, \text{ } t \neq (n + 1)\tau.\\
&\left.
\begin{cases}\label{eq:equation_7}
S_1(t^+) = S_1(t) + \varphi(t) S_2(t), \\
S_2(t^+) = (1 - \varphi(t)) S_2(t),
\end{cases}
\right.\text{if } t = (n + l)\tau.\\
&\left.
\begin{cases}\label{eq:equation_8}
S_1(t^+) = S_1(t), \\
S_2(t^+) = S_2(t) + u_n,
\end{cases}
\right.\text{if } t = (n + 1)\tau.
\end{align}
\end{subequations}
Within the impulsive differential equations framework adopted in this study, $\tau$  denotes the fixed time interval between successive releases of \textit{Wolbachia}-infected mosquitoes. Seasonal infection-loss events occur at shifted impulsive times of the form $(n+l)\tau$ and, depending on the value of $l$, may coincide with the release instants. In real-world applications, the release size $u_n$ is naturally constrained by the limited availability of \textit{Wolbachia}-infected mosquitoes. To reflect this practical limitation, the set of admissible releases is defined as $U:= \{ 0 \leq u_n\leq u_{\text{max}}, \mid n \geq 0 \}$, where $u_{\text{max}} \geq 0$ denotes the maximum number of \textit{Wolbachia}-carrying mosquitoes that can be introduced at any given intervention.

This formulation bridges theoretical modeling with real-world constraints, acknowledging that practical implementation must balance idealized strategies with biological and logistical limitations. The impulsive framework adopted here enables the integration of both environmental variability and active control strategies within a unified model. Seasonal loss of infection, caused by temperature fluctuations, is captured through redistributive impulses that reduce the number of infected individuals by transferring a proportion to the wild mosquito population. In contrast, periodic releases introduce \textit{Wolbachia}-infected mosquitoes into the environment and also act as interventions to sustain or increase their presence over time. This modeling approach provides a realistic and analytically tractable setting for exploring optimal release strategies and their interaction with adverse seasonal environmental stressors.

\subsection{Model behavior}
\begin{proposition}\label{prop:prop_1}
Let $(S_1(0), S_2(0))$ be a non-negative initial condition, $u_n \in U$, and $\varphi(t) \in [0,1]$. Let $(S_1(t), S_2(t))$ denote the solution of system \eqref{eq:equation_6}–\eqref{eq:equation_8}. Then, $(S_1(t), S_2(t))$ remains non-negative for all $t \geq 0$.
\end{proposition}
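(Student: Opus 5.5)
The plan is an induction over the impulse intervals, combining Theorem~\ref{thm:existence_uniqueness} for the continuous flow with a direct check of the two jump maps.

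\textbf{Partition of time.} The impulse instants are $t=(n+l)\tau$ (infection loss) and $t=(n+1)\tau$ (release), for $n\in\mathbb{Z}_+$. Order them increasingly as $0=t_0\le t_1<t_2<\cdots$, with $t_k\to\infty$ because $\tau>0$. If $l=0$ or $l=1$, a loss instant coincides with a release instant. In that case I treat the two jumps as successive maps applied at the same time, in either fixed order. Each composite map will be shown to preserve $\mathbb{R}^2_+$, so the order does not matter.

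\textbf{Continuous flow.} On each open interval $(t_k,t_{k+1})$ the solution follows system~\eqref{eq:equation_1}, started from the post-jump state $(S_1(t_k^+),S_2(t_k^+))$. By Theorem~\ref{thm:existence_uniqueness}, parts II and IV, if this state lies in $\mathbb{R}^2_+$, then the solution exists, is unique and bounded, and remains in $\mathbb{R}^2_+$ on the whole interval. Boundedness rules out blow-up before $t_{k+1}$. The left limit $(S_1(t_{k+1}),S_2(t_{k+1}))$ therefore exists and is non-negative, since $\mathbb{R}^2_+$ is closed.

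\textbf{Jump maps.} For the loss impulse~\eqref{eq:equation_7}, take $S_1,S_2\ge0$ and $\varphi(t)\in[0,1]$. Then $S_1+\varphi S_2\ge0$ and $(1-\varphi)S_2\ge0$. For the release impulse~\eqref{eq:equation_8}, $u_n\in U$ gives $u_n\ge0$, so $S_2+u_n\ge0$, and $S_1$ is unchanged. Both maps therefore send $\mathbb{R}^2_+$ into itself.

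\textbf{Induction.} The base case is that the initial data are non-negative. Alternating the flow step and the jump step on consecutive intervals gives non-negativity on every $[0,t_k]$. Since $t_k\to\infty$, this yields non-negativity for all $t\ge0$.

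\textbf{Main obstacle.} The only delicate point is that the vector field has the factor $(S_1+(1-\gamma)S_2)/(S_1+S_2)$, which is undefined at the origin. A jump could in principle land exactly at $(0,0)$, for instance $\varphi=1$ with $S_1=0$ and $u_n=0$. I would handle this by noting two facts:
\begin{itemize}
\item $(0,0)$ is invariant if we extend the factor by continuity; it is bounded between $1-\gamma$ and $1$.
\item The $S_1$-equation carries the factor $S_1$, so the product extends continuously, indeed Lipschitz, to $(0,0)$ with value $0$.
\end{itemize}
The solution then simply remains at zero, which is non-negative. The same extension is implicitly used in Theorem~\ref{thm:existence_uniqueness}.

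Apart from this point, the argument is routine bookkeeping of the impulse times.
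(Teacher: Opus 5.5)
Your proof is correct and takes the same route as the paper. Both arguments use invariance of $\mathbb{R}^2_+$ under the continuous flow from Theorem~\ref{thm:existence_uniqueness}, a direct check that both jump maps send $\mathbb{R}^2_+$ into itself, and induction over the impulse times; your extra bookkeeping (coincident impulses when $l\in\{0,1\}$, and $t_k\to\infty$) is a harmless refinement.

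Two small corrections to your side remark. Your example of landing at the origin is wrong: with $\varphi=1$ and $S_1=0$ the loss map gives $(S_2,0)$, not $(0,0)$. In fact no jump can reach the origin from a nonzero state, since the loss map preserves $S_1+S_2$ and the release map does not decrease it, so $(0,0)$ can only arise from the initial data. Also, the factor $\frac{S_1+(1-\gamma)S_2}{S_1+S_2}$ itself has no continuous extension at the origin, because it takes different constant values on different rays; only the product $S_1\,\frac{S_1+(1-\gamma)S_2}{S_1+S_2}$ extends continuously there, as your second bullet correctly states.
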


\begin{proof}
Between impulses, the solution $(S_1(t), S_2(t))$ remains non-negative for all $t \geq 0$, as established in Theorem~\eqref{thm:existence_uniqueness}-(I).

At impulse instants of the type $t = (n+l)\tau$, redistribution between the variables occurs, given by
\[
S_1(t^+) = S_1(t) + \varphi(t) S_2(t), \qquad 
S_2(t^+) = (1 - \varphi(t)) S_2(t).
\]
Since $\varphi(t) \in [0,1]$ and $S_2(t) \geq 0$, both $S_1(t^+)$ and $S_2(t^+)$ remain non-negative.

At impulse instants of the type $t = (n+1)\tau$, we have
\[
S_1(t^+) = S_1(t), \qquad S_2(t^+) = S_2(t) + u_n.
\]
Since $S_1(t) \geq 0$, $S_2(t) \geq 0$, and $u_n \geq 0$ by hypothesis, it follows that $S_1(t^+), S_2(t^+) \geq 0$.

Therefore, by induction over all impulse events and continuous intervals, the non-negativity of $S_1(t)$ and $S_2(t)$ is preserved for all $t \geq 0$.
\end{proof}

\begin{proposition}\label{prop:prop_2}
For each non-negative initial condition $(S_1(0), S_2(0))$, release amounts $u_n \in U$ ($u_n \geq 0$) at $t = (n+1)\tau$, and transfers $\varphi(t)$ defined by \eqref{eq:equation_5} at $t = (n+l)\tau$, the system \eqref{eq:equation_6}–\eqref{eq:equation_8} admits a unique solution defined on $[0, \infty)$.
\end{proposition}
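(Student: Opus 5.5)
The plan is to build the solution piecewise, one impulse interval at a time, and glue the pieces together by induction over the impulse times. First we order the impulse instants. For $l\in(0,1)$ the impulse times form the increasing sequence
\[
0<l\tau<\tau<(1+l)\tau<2\tau<\cdots ,
\]
with consecutive instants separated by at least $\min\{l\tau,(1-l)\tau\}>0$. If $l=0$ or $l=1$, the loss and release impulses coincide at $n\tau$ or $(n+1)\tau$. In that case the two jump maps \eqref{eq:equation_7}--\eqref{eq:equation_8} are applied in a fixed order, giving a single jump map. Either way, the impulse instants $\{t_k\}$ are isolated and $t_k\to\infty$. In particular only finitely many of them lie in any bounded interval, which rules out accumulation (Zeno) behaviour.

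On each open interval $(t_k,t_{k+1}]$ the system reduces to the autonomous ODE \eqref{eq:equation_1} with initial value $(S_1(t_k^+),S_2(t_k^+))$. We take this value to be non-negative, which Proposition~\ref{prop:prop_1} guarantees inductively. Theorem~\ref{thm:existence_uniqueness}-(IV) then yields a unique, non-negative, bounded solution on all of $[t_k,\infty)$, and we restrict it to $[t_k,t_{k+1}]$.

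One subtlety needs care: the CI factor $\frac{S_1+(1-\gamma)S_2}{S_1+S_2}$ is undefined at the origin. We handle this as in Theorem~\ref{thm:existence_uniqueness}. On $\mathbb{R}^2_+\setminus\{(0,0)\}$ the factor lies in $[1-\gamma,1]$, so the term $S_1\cdot(\text{factor})$ extends continuously by $0$ at the origin and is locally Lipschitz away from it. The origin itself is an equilibrium. If the state ever reaches $(0,0)$, the solution stays there until the next release, which moves it off the origin whenever $u_n>0$. We invoke Theorem~\ref{thm:existence_uniqueness} rather than redoing this analysis.

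At each $t_k$ the jump maps are explicit, single-valued functions of the left limit $(S_1(t_k),S_2(t_k))$, since $\varphi(t_k)$ is a known number in $[0,1]$ and $u_n$ is a given constant. The post-jump state is therefore uniquely determined and non-negative by Proposition~\ref{prop:prop_1}, and it serves as the initial datum for the next interval.

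Induction on $k$ then gives a unique, piecewise continuous, left-continuous solution on $[0,t_k]$ for every $k$. Because $t_k\to\infty$, this solution is defined on all of $[0,\infty)$. Uniqueness holds because any two solutions agree on each interval by ODE uniqueness and agree at each jump because the jump maps are functions.

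The main obstacle is the regularity at the origin, which only matters if the local theory of Theorem~\ref{thm:existence_uniqueness} is not taken for granted. If a direct argument is needed, the plan is to show that the vector field is locally Lipschitz on $\mathbb{R}^2_+\setminus\{0\}$. Together with boundedness (part III), this excludes finite-time blow-up, and the argument above handles trajectories that reach the origin.
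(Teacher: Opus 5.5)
Your proposal is correct and follows essentially the same route as the paper. Both solve the continuous dynamics between consecutive impulse instants via Theorem~\ref{thm:existence_uniqueness}, observe that the jump maps \eqref{eq:equation_7}--\eqref{eq:equation_8} are single-valued, and extend the solution uniquely by induction over the impulse intervals. Your version is more careful than the paper's on four points the paper leaves implicit: the impulse instants do not accumulate; coincident impulses for $l\in\{0,1\}$ need a fixed order of application; Proposition~\ref{prop:prop_1} keeps each post-jump state in the orthant where Theorem~\ref{thm:existence_uniqueness} applies; and it is $S_1$ times the CI factor, not the factor itself, that extends continuously to the origin.
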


\begin{proof}
From Theorem~\eqref{thm:existence_uniqueness}, the solution exists and is unique between impulses.

At discrete instants of the type $t = (n+l)\tau$, the functions $S_1(t)$ and $S_2(t)$ admit well-defined right-hand limits:
\[
S_1(k\tau^+) = \lim_{\epsilon \to 0^+} S_1(k\tau + \epsilon), 
\qquad 
S_2(k\tau^+) = \lim_{\epsilon \to 0^+} S_2(k\tau + \epsilon).
\]
At these moments, the impulsive transfers are given by
\[
S_1(t^+) = S_1(t) + \varphi(t) S_2(t), 
\qquad 
S_2(t^+) = (1 - \varphi(t)) S_2(t),
\]
where $\varphi(t)$ is defined by \eqref{eq:equation_5} and satisfies $\varphi(t) \in [0,1]$. These relations are linear and well defined, ensuring piecewise continuity of the solution and preservation of uniqueness after each impulse.

At instants of the type $t = (n+1)\tau$, the impulsive releases occur according to
\[
S_1(t^+) = S_1(t), 
\qquad 
S_2(t^+) = S_2(t) + u_n,
\]
with $u_n \in U$ and $u_n \geq 0$. This update is deterministic and also preserves the uniqueness of the solution.

Hence, the solutions are piecewise continuous, with well-defined left and right limits at every impulse time. By induction over the intervals $[n\tau, (n+1)\tau]$, the solution can be uniquely and consistently extended to the entire domain $[0, \infty)$.
\end{proof}

For the analysis of the model with impulsive interventions, we define the following auxiliary system. This system is adjustable, since both the releases $u_n \in U$ and the intervention period $\tau$ can be adjusted according to the chosen intervention strategy. Our approach is to derive an analytical solution by assuming the simplification $S_1(t) \approx 0$. Consequently, the system below is obtained from equations~\eqref{eq:equation_6}--\eqref{eq:equation_8} under the condition $S_1(t) = 0$:
\begin{align}
\begin{cases}\label{eq:equation_9}
\dfrac{dB}{dt}(t) = B\psi_2 \left(1 - \dfrac{B}{K} \right) - \delta_2 B,
\quad t \neq (n + l)\tau, \text{ } t \neq (n + 1)\tau,\\[6pt]   
B(t^+) = (1 - \varphi(t)) B(t),  \qquad t = (n + l)\tau,\\[6pt]
B(t^+) = B(t) + u_n, \qquad t = (n + 1)\tau.
\end{cases}
\end{align}

In what follows, we show the existence of a unique positive $\tau$-periodic solution for system~\eqref{eq:equation_9}, which characterizes the long-term periodic dynamics of the system under impulsive interventions.
\begin{theorem}\label{thm:thm_3}
Consider the auxiliary system \eqref{eq:equation_9}, with $\varphi(t) \in [0,1]$ defined by \eqref{eq:equation_5} and $u_n \in U$. Then, there exists a unique positive $\tau$-periodic solution $\bar{B}(t)$ given by:
\begin{align}\label{eq:equation_10}
\bar{B}(t) = \begin{cases}
\dfrac{\bar{K} B^+e^{r_2 (t-n\tau)}}{B^+\left(e^{r_2 (t-n\tau)} -1\right)+\bar{K}}, & n\tau < t \leq (n+l)\tau,\\
\dfrac{\bar{K} B^* e^{r_2 (t-(n+l)\tau)}}{B^*\left(e^{r_2 (t-(n+l)\tau)} -1\right)+\bar{K}}, & (n+l)\tau < t \leq (n+1)\tau,
\end{cases}
\end{align}
where
\begin{equation}\label{eq:equation_11}
\begin{cases}
B^{+}
    \;=\;
    \dfrac{\bar{K}\big((1-\varphi)e^{r_2\tau} - 1\big) + u_n A + \sqrt{\big(\bar{K}\big((1-\varphi)e^{r_2\tau} - 1\big) + u_n A\big)^2 + 4 u_n \bar{K} A}}{2A},\\
B^* = \dfrac{(1-\varphi)\bar{K}B^+e^{r_2l\tau}}{B^+\left(e^{r_2l\tau}-1\right)+\bar{K}},
\end{cases}
\end{equation}
with $A$ defined in \eqref{eq:A_def}.

Moreover, the solution $\bar{B}(t)$ is globally asymptotically stable.
\end{theorem}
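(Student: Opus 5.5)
The proof reduces the impulsive system \eqref{eq:equation_9} to a one-dimensional stroboscopic map. The first step is to solve the logistic equation between impulses explicitly. Set $r_2=\psi_2-\delta_2>0$ (by \eqref{eq:equation_2}) and $\bar K=K(1-1/R_2)$. For $t>t_0$ with no impulse in $(t_0,t]$,
\[
B(t)=\frac{\bar K B(t_0^+)e^{r_2(t-t_0)}}{B(t_0^+)\left(e^{r_2(t-t_0)}-1\right)+\bar K}.
\]
For the stroboscopic analysis I treat $u_n\equiv u$ and $\varphi=\varphi((n+l)\tau)$ as constant over cycles, which is what periodicity requires. Write $B_n=B(n\tau^+)$. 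Flowing from $n\tau^+$ to $(n+l)\tau$ and applying the loss impulse gives exactly the expression for $B^*$ in \eqref{eq:equation_11} in terms of $B_n$. Flowing from $(n+l)\tau^+$ to $(n+1)\tau$ and adding $u$ then gives a map $B_{n+1}=F(B_n)$. Logistic flow maps compose as Möbius maps, and the multiplicative loss impulse is also Möbius. So $F(x)=\dfrac{\alpha x}{\beta x+\bar K}+u$ with $\alpha=(1-\varphi)\bar K e^{r_2\tau}$ and
\[
\beta=(e^{r_2 l\tau}-1)+(1-\varphi)e^{r_2 l\tau}\left(e^{r_2(1-l)\tau}-1\right).
\]
This $\beta$ should match the constant $A$ in \eqref{eq:A_def}, up to normalization.

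Next I solve the fixed-point equation $F(x)=x$. Clearing denominators gives the quadratic
\[
Ax^2-\left(\bar K\left((1-\varphi)e^{r_2\tau}-1\right)+uA\right)x-u\bar K=0.
\]
When $u>0$, the product of the roots is $-u\bar K/A<0$, so there is exactly one positive root. That root is $B^+$ in \eqref{eq:equation_11}. When $u=0$, the root $B^+$ is positive precisely when $(1-\varphi)e^{r_2\tau}>1$, and in that case the same formula holds. Positivity in the degenerate case has to be stated with this caveat, or one assumes $u_n>0$. Substituting $B(n\tau^+)=B^+$ into the explicit flow on both subintervals yields \eqref{eq:equation_10}. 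This solution is $\tau$-periodic by construction. Uniqueness follows because any positive $\tau$-periodic solution must have $B(n\tau^+)$ equal to a positive fixed point of $F$.

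For global asymptotic stability, I use that $F$ is increasing and concave on $[0,\infty)$. Its derivative is $F'(x)=\alpha\bar K/(\beta x+\bar K)^2>0$, and $F$ tends to a finite limit as $x\to\infty$. Hence $F(x)-x$ is positive on $(0,B^+)$ and negative on $(B^+,\infty)$. The iterates $B_n$ are therefore monotone and bounded, and they converge to $B^+$ from any $B_0>0$; with $u>0$ this also holds from $B_0=0$. Convergence of the continuous solution then follows from continuous dependence of the explicit flow formula on its initial value over each bounded subinterval. This gives $|B(t)-\bar B(t)|\to0$.

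I expect the main obstacle to be a modeling one: $\varphi(t)$ from \eqref{eq:equation_5} is $365$-periodic rather than $\tau$-periodic, and $u_n$ may vary with $n$. A genuinely $\tau$-periodic solution exists only if $\varphi((n+l)\tau)$ and $u_n$ are constant, for instance when $365$ is a multiple of $\tau$ and the argument is taken per cycle. I would state this assumption explicitly. Alternatively, I would work over the common period and apply the same monotone concave-map argument to the composed map over one full period, which remains Möbius plus a constant.
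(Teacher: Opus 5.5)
Your proposal is correct and follows the same route as the paper. Both arguments solve the logistic flow explicitly between pulses and build the stroboscopic map $h(B)=\frac{\bar{K}e^{r_2\tau}(1-\varphi)B}{AB+\bar{K}}+u_n$ (your $\beta$ expands exactly to $A$). Both then take $B^+$ as the unique positive root of the quadratic fixed-point equation and get global stability from monotone convergence of the iterates. Your concavity argument for the single crossing, and your explicit caveats, make the argument tighter than the paper's; the paper's proof silently assumes $u_n>0$ and that $\varphi((n+l)\tau)$ and $u_n$ do not depend on $n$.
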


The proof of Theorem \ref{thm:thm_3} is given in the Appendix \eqref{appendix:C}.

Establishing the uniform boundedness of the solutions is a crucial step in our analysis. Since we are dealing with population dynamics, it is natural to require that the trajectories remain biologically meaningful, that is, confined to a finite range for all time. Moreover, this property plays a central role in proving the global stability of the desired solutions, as it prevents the system from exhibiting unrealistic or uncontrolled behavior. Thus, after constructing and analyzing the auxiliary system, we now proceed to the proposition that ensures the uniform boundedness of the solutions of the impulsive model.

\begin{proposition}\label{prop:boundedness}
    Let $(S_1(t),S_2(t))$ be a solution of system \eqref{eq:equation_6}-\eqref{eq:equation_8}, with positive parameters, $\varphi(t) \in [0,1]$, $u_n \in U$ and non-negative initial conditions. Then $(S_1(t),S_2(t))$ is uniformly bounded.
\end{proposition}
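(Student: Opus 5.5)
We will work with the total population $N(t)=S_1(t)+S_2(t)$ and show that it is uniformly bounded. Since both components are non-negative by Proposition~\ref{prop:prop_1}, this gives $0\le S_i(t)\le N(t)$ and hence uniform boundedness of each component.

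\textbf{Step 1: the redistributive impulse.} At the infection-loss instants $t=(n+l)\tau$, the impulse \eqref{eq:equation_7} moves individuals from one class to the other. It therefore leaves $N$ unchanged:
\[
N(t^+)=S_1(t)+\varphi(t)S_2(t)+(1-\varphi(t))S_2(t)=N(t).
\]
So these instants can be ignored when estimating $N$.

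\textbf{Step 2: a differential inequality between impulses.} Off the impulse times we add the two equations in \eqref{eq:equation_6}. The CI factor satisfies $0\le \frac{S_1+(1-\gamma)S_2}{S_1+S_2}\le 1$. Whenever $N\le K$, the logistic factor is non-negative, and we get
\[
\frac{dN}{dt}\le \psi N\Big(1-\frac{N}{K}\Big)-\delta N,\qquad \psi=\max\{\psi_1,\psi_2\},\quad \delta=\min\{\delta_1,\delta_2\}.
\]
Whenever $N>K$, the logistic factor is negative. Then $S_1$ and $S_2$ are each non-increasing, and $\frac{dN}{dt}\le-\delta N<0$.

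\textbf{Step 3: boundedness on each release period.} Combining the two regimes, on each interval $(n\tau,(n+1)\tau]$ the quantity $N$ cannot increase while it exceeds $K$. Consequently
\[
N(t)\le \max\{N(n\tau^+),K\}.
\]
At the release instants \eqref{eq:equation_8} we have $N((n+1)\tau^+)=N((n+1)\tau)+u_n\le N((n+1)\tau)+u_{\max}$.

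\textbf{Step 4: the main obstacle, controlling accumulation of releases.} Step 3 alone would allow $N$ to grow by $u_{\max}$ per cycle. To prevent this we use the strict decay $-\delta N$ when $N>K$. Whenever $N(n\tau^+)>K$, we get
\[
N((n+1)\tau)\le \max\{K,\,e^{-\delta\tau}N(n\tau^+)\}.
\]
Writing $x_n=N(n\tau^+)$, this gives the recursion
\[
x_{n+1}\le \max\{K,\,e^{-\delta\tau}x_n\}+u_{\max}.
\]
An induction on $n$ then yields
\[
x_n\le M:=\max\Big\{N(0^+),\;K+u_{\max},\;\frac{u_{\max}}{1-e^{-\delta\tau}}\Big\},
\]
and hence $N(t)\le \max\{M,K\}$ for all $t\ge0$.

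In the induction step, the case $x_n\le K$ gives $x_{n+1}\le K+u_{\max}\le M$. The case $x_n>K$ needs more care, because $N$ may cross below $K$ inside the interval. In that sub-case, once $N$ reaches $K$ the flow cannot push it back above $K$ before the next release, so $N((n+1)\tau)\le K$ and $x_{n+1}\le K+u_{\max}$. Otherwise $N$ stays above $K$ throughout the interval, decays exponentially, and gives $x_{n+1}\le e^{-\delta\tau}M+u_{\max}\le M$, using $M\ge u_{\max}/(1-e^{-\delta\tau})$.

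The resulting bound depends only on the initial data and the parameters, not on $n$, which gives the uniform boundedness claimed in Proposition~\ref{prop:boundedness}.
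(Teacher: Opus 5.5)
Your proof is correct, and it takes a different and more careful route than the paper's. The paper relies on Theorem~\ref{thm:existence_uniqueness}-(III). It asserts a constant $M_c$ that bounds $S_1+S_2$ on every inter-impulse interval for every solution, observes (as in your Step~1) that the redistributive impulse preserves the total, and concludes $S\le M_c+u_{\max}$.

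That shortcut skips exactly the issue your Step~4 handles. The bound actually proved in Theorem~\ref{thm:existence_uniqueness}-(III) is $\max\{S(0),y^*,K\}$, which depends on the value at the start of each interval. After a release, the next interval starts from a possibly larger value, and nothing in the paper's argument rules out a gain of $u_{\max}$ per cycle.

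There is a second point. The comparison inequality $\frac{dS}{dt}\le\psi S(1-\frac{S}{K})-\delta S$ invoked there is only valid for $S\le K$. For $S>K$, replacing the CI factor by $1$ and $\psi_i$ by $\psi$ goes the wrong way. Your Step~2 correctly splits into the two regimes.

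Your argument rests on three ingredients:
\begin{itemize}
\item forward invariance of $\{N\le K\}$ under the flow, since $N'\le-\delta K<0$ on $N=K$ and the loss impulse preserves $N$;
\item strict decay $N'\le-\delta N$ above $K$;
\item the recursion $x_{n+1}\le\max\{K,e^{-\delta\tau}x_n\}+u_{\max}$.
\end{itemize}
Together these give the explicit bound $\max\{N(0^+),K+u_{\max},u_{\max}/(1-e^{-\delta\tau})\}$.

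The paper's formulation also claims independence from the particular solution. Your recursion delivers this in the ultimate sense: it implies $\limsup_{n}x_n\le\max\{K+u_{\max},\,u_{\max}/(1-e^{-\delta\tau})\}$, a constant depending only on the parameters. You may want to add that remark.
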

\begin{proof}
From Theorem~\ref{thm:existence_uniqueness}-(III) we know that the solutions of
the continuous part of the model are uniformly bounded. In particular, there
exists a constant $M_c>0$ such that $S(t^-)=S_1(t^-)+S_2(t^-)\le M_c$ for every
interval between impulses and for every solution.

We now show that the impulsive effects do not destroy this boundedness. Consider
first the impulses occurring at times $t=(n+l)\tau$. At such instants,
\[
S_1(t^+) = S_1(t^-) + \varphi(t) S_2(t^-), \qquad 
S_2(t^+) = (1-\varphi(t)) S_2(t^-),
\]
with $0 \le \varphi(t) \le 1$. Hence,
\[
S(t^+) = S_1(t^+) + S_2(t^+) 
= S_1(t^-) + \varphi S_2(t^-) + (1-\varphi) S_2(t^-)
= S(t^-).
\]
Thus, impulses of this type preserve $S$ and do not increase its value.

Next, consider the impulses at times $t=(n+1)\tau$. In this case,
\[
S_1(t^+) = S_1(t^-), \qquad 
S_2(t^+) = S_2(t^-) + u_n,
\]
so that
\[
S(t^+) = S(t^-) + u_n.
\]
By assumption, the release magnitudes satisfy $u_n \le u_{\max}$ for each $n$.
Therefore,
\[
S(t^+) \le M_c + u_{\max}.
\]

Between two consecutive impulses, the solution evolves according to the continuous
dynamics and thus remains uniformly bounded. Consequently, defining
\[
M := M_c + u_{\max},
\]
we obtain $S(t) \le M$ for all $t \ge 0$ and for every solution of the full
impulsive system. Hence both $S_1(t)$ and $S_2(t)$ remain uniformly bounded.

Finally, since the vector field of the continuous dynamics is locally Lipschitz
and the jump maps either preserve $S$ or map bounded sets into bounded sets, no
finite-time blow-up can occur. Therefore, the solutions of the complete system is uniformly bounded.
\end{proof}

Having described the qualitative behavior of the model and the mechanisms governing the interaction between the two species, we now move to the formal analysis of the relevant equilibria in the impulsive regime. In particular, we examine the existence and stability of periodic solutions associated with the Wild-free state, in which the wild population of infected mosquitoes is eliminated, as well as the behavior of solutions that remain close to this state when small redistributive perturbations are introduced, namely, the impulses modeling the loss of infection due to high temperatures. In the next section, we establish conditions ensuring the existence of a positive, periodic, Wild-free solution and show that such a solution remains stable even in the presence of sufficiently small redistributive impulses. These results provide the necessary foundation for understanding how the system responds to perturbations and support the analysis of global stability around the dynamics we aim to achieve.

\subsection{Global Stability of Solutions}

In this section, we examine the existence and stability of the periodic solutions that constitute the main focus of this work, particularly the Wild-free solution, which represents the desired state in which the species 
$S_1$ is eliminated while the population of mosquitoes infected with \textit{Wolbachia} remains periodically stable. We first show that this solution exists uniquely and identify the conditions under which it is globally asymptotically stable in the absence of redistributive impulses, that is, when $\varphi(t) \equiv 0$. We then prove that, even in the presence of redistributive impulses of small magnitude, the system still admits a positive periodic solution close to 
$(0, \bar{S}_2)$, whose stability also remains preserved. These results provide the mathematical foundation for understanding how small perturbations affect the desired regime and ensure the robustness of the Wild-free solution in the impulsive setting.

From a biological perspective, establishing the existence and stability of these solutions is essential for assessing the feasibility of population management strategies. The Wild-free solution corresponds to the scenario in which the wild mosquito population is suppressed while the infected population persists sustainably over time, one of the central goals of this study. Demonstrating that this state not only exists but is also stable implies that moderate environmental fluctuations, seasonal variations, or small redistributive events do not compromise control of the wild population. Thus, the theoretical results indicate that the desired equilibrium is biologically attainable and remains robust under some realistic intervention conditions.

\begin{theorem}\label{thm:existence_eng}
Let $(S_1(0), S_2(0))$ be non-negative initial conditions. 
Assume that, in the absence of redistributive impulses (i.e., $\varphi \equiv 0$), 
system~\eqref{eq:equation_6}--\eqref{eq:equation_8} admits a unique, positive, 
$\tau$-periodic, wild-mosquito--free solution
\[
(0, \bar{S}_2(t)).
\]
Then, for $\varphi$ sufficiently small, the impulsive system admits a unique, positive, 
$\tau$-periodic solution
\[
(S_1^{\varphi}(t), \bar{S}_2^{\varphi}(t)),
\]
where $S_1^{\varphi}(t)$ remains uniformly small and 
$\bar{S}_2^{\varphi}(t)$ stays uniformly close to $\bar{S}_2(t)$ for all $t \ge 0$.
\end{theorem}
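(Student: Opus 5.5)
The natural approach is a perturbation argument on the period (stroboscopic) map. Let $P_\varphi:\mathbb{R}^2_+\to\mathbb{R}^2_+$ send $(S_1(n\tau^+),S_2(n\tau^+))$ to $(S_1((n+1)\tau^+),S_2((n+1)\tau^+))$. It is the composition of three pieces: the flow of \eqref{eq:equation_6} on $[n\tau,(n+l)\tau]$, the redistributive jump \eqref{eq:equation_7}, and then the flow on $[(n+l)\tau,(n+1)\tau]$ followed by the release \eqref{eq:equation_8}. $\tau$-periodic solutions correspond exactly to fixed points of $P_\varphi$. Uniqueness between impulses and the piecewise extension are guaranteed by Propositions~\ref{prop:prop_2} and \ref{prop:boundedness}, which also confine everything to a compact region. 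On this region the vector field is $C^1$ away from the origin, since the CI factor is smooth where $S_1+S_2>0$, and $S_2\ge u_n>0$ after each release keeps us away from $(0,0)$. The jump map depends smoothly on the scalar $\varphi$. Hence $P_\varphi(x)$ is $C^1$ jointly in $(x,\varphi)$. By hypothesis, $P_0$ has the fixed point $x_0=(0,\bar S_2(0^+))$, and Theorem~\ref{thm:thm_3} with $\varphi=0$ identifies $\bar S_2=\bar B$. Here $\varphi$ is treated as a small parameter, namely $\varphi_{\max}\to 0$ in \eqref{eq:equation_5}, with the impulse values $\varphi((n+l)\tau)$ treated as fixed along the periodic orbit.

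The key step is to show that $I-DP_0(x_0)$ is invertible, so that the implicit function theorem yields a unique fixed point $x_\varphi$ near $x_0$ depending continuously on $\varphi$. At $S_1=0$ the $S_1$-axis is invariant, so the Jacobian of $P_0$ is triangular. Its $(2,2)$ entry is the derivative of the auxiliary scalar map for \eqref{eq:equation_9}, which is in $(0,1)$ because of the global asymptotic stability in Theorem~\ref{thm:thm_3}; for the logistic flow one can compute it explicitly from \eqref{eq:equation_10}. Its $(1,1)$ entry is the Floquet multiplier
\[
\mu=\exp\!\left(\int_0^\tau\Big[\psi_1\Big(1-\tfrac{\bar S_2(s)}{K}\Big)(1-\gamma)-\delta_1\Big]ds\right),
\]
obtained by linearizing the $S_1$ equation along $(0,\bar S_2)$, where the CI factor equals $1-\gamma$. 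I would add the natural nondegeneracy condition $\mu\neq1$. In the intended stable regime $\mu<1$, which is the same threshold that makes the wild-free solution attracting. This makes $1$ a non-eigenvalue, and the implicit function theorem applies.

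Continuity of the flow with respect to initial data then converts $|x_\varphi-x_0|=O(\varphi)$ into the uniform estimates $\sup_t|S_1^\varphi(t)|=O(\varphi)$ and $\sup_t|\bar S_2^\varphi(t)-\bar S_2(t)|=O(\varphi)$ over one period, and hence for all $t\ge0$ by periodicity. Positivity is checked separately. For $\varphi>0$ the jump injects $\varphi S_2>0$ into $S_1$, so $S_1^\varphi>0$ after the first loss event, and the flow preserves positivity by Theorem~\ref{thm:existence_uniqueness}. Thus the fixed point has $S_1$-component positive rather than lying on the axis.

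The main obstacle is the spectral condition on the Floquet multiplier. It is not stated explicitly in the theorem, but it must be encoded in the word ``unique'' and in the assumed behavior at $\varphi=0$. If $\mu=1$, a bifurcation can occur and the implicit function theorem fails. I would therefore make this assumption explicit: either derive it from the hypothesis that $(0,\bar S_2)$ is the relevant stable wild-free regime, or impose $\mu<1$, which also yields local stability of the perturbed orbit, since the eigenvalues of $DP_\varphi(x_\varphi)$ stay inside the unit disk for small $\varphi$.
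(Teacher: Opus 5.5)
Your route is genuinely different from the paper's, and much more rigorous. The paper notes that at $\varphi=0$ the wild-free orbit is $(0,\bar B)$ from Theorem~\ref{thm:thm_3}. It then asserts that, by continuous dependence between impulses, a small redistribution produces a small deviation. It never constructs a fixed point of the period map, never proves uniqueness, and never checks positivity. Your stroboscopic map plus implicit function theorem does construct the periodic solution. It gives uniqueness in a neighborhood of the wild-free orbit, which is the sense the statement needs, and it gives $O(\varphi)$ estimates. When both multipliers lie in $(0,1)$, it also gives local stability of the perturbed orbit. For the $(2,2)$ entry, $h'(B^+)<1$ follows from concavity of $h$ together with $h(0)=u>0$; global attractivity alone would not give the strict inequality.

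There is a gap in the positivity step, and it shows that $\mu\neq1$ is not enough. The implicit function theorem must be applied on an open neighborhood of $x_0=(0,\bar S_2(0^+))$, which is a boundary point of $\mathbb{R}^2_+$. So the fixed point $(s,b)$ may have $s<0$, and your ``the jump injects $\varphi S_2$'' argument presupposes $s\ge 0$. You have to compute the sign instead.

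Write $f_1=S_1G(S_1,S_2)$, with $G$ smooth near $\{S_1=0,\,S_2>0\}$. The flow then multiplies $S_1$ by a positive factor $m_1$ on $[0,l\tau]$ and $m_2$ on $[l\tau,\tau]$, so
\[
s=m_2\bigl(m_1s+\varphi\,S_2(l\tau^-)\bigr),\qquad\text{i.e.}\qquad s\,(1-m_1m_2)=\varphi\,m_2\,S_2(l\tau^-).
\]
As $\varphi\to0$, $m_1m_2\to\mu$ and $S_2(l\tau^-)\to\bar S_2(l\tau)>0$, so for small $\varphi>0$ the sign of $s$ is the sign of $1-\mu$.

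If $\mu<1$, then $s>0$ and your propagation argument finishes the proof. If $\mu>1$, the unique periodic orbit near $(0,\bar S_2)$ has $S_1<0$ on the whole period. In that case no positive periodic solution close to the wild-free one exists, and the theorem as stated is false. This case does occur. Take $R_2<(1-\gamma)R_1$, the instability threshold in Theorem~\ref{thm:thm_1}, and $u$ small. Then $\bar S_2\approx\bar K$, and the integrand of $\mu$ is close to $\delta_1\bigl((1-\gamma)R_1/R_2-1\bigr)>0$.

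So the spectral condition is not ``encoded in the word unique'': Theorem~\ref{thm:thm_3} gives a unique wild-free orbit for every value of $\mu$. The condition $\mu<1$ has to be added as an explicit hypothesis, and it is exactly what the paper's continuity argument silently relies on. It is implied by the paper's condition \eqref{globalstability}, which makes the integrand of $\mu$ negative pointwise. Your hypothesis is therefore weaker than the one the paper imposes in Theorem~\ref{thm:stability_sep}. (Minor: the invariant set is $\{S_1=0\}$, the $S_2$-axis, not the $S_1$-axis.)
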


\begin{proof}
When $\varphi(t)\equiv 0$, setting $S_1(t)\equiv 0$ in system 
\eqref{eq:equation_6}--\eqref{eq:equation_8} reduces the equation for $S_2$ to the
auxiliary subsystem studied in~\eqref{eq:equation_9}. Since this subsystem admits a unique positive
$\tau$-periodic solution $\bar{B}(t)$ in Theorem~\ref{thm:thm_3}, we obtain the $\tau$-periodic solution
\[
(0,\bar{S}_2(t)), \qquad \text{with } \bar{S}_2(t)=\bar{B}(t),
\]
for the full system without redistributive impulses.

Now consider $\varphi \in (0,1]$. At each impulsive time $t=(n+l)\tau$, the redistribution
acts only by transferring a fraction $\varphi$ of $S_2$ into $S_1$. For $\varphi$
sufficiently small, this perturbation produces a small, continuous-in-$\varphi$ deviation
from the orbit $(0,\bar{S}_2(t))$. Between impulses the dynamics remain smooth and depend
continuously on initial data, so the solution
\[
(S_1^{\varphi}(t),\,\bar{S}_2^{\varphi}(t))
\]
exists for all $t\ge 0$ and stays close to $(0,\bar{S}_2(t))$. In particular,
$S_1^{\varphi}(t)$ remains uniformly small and the $S_2$–component satisfies
$\bar{S}_2^{\varphi}(t)\to \bar{S}_2(t)$ as $\varphi\to 0$. Hence, for $\varphi$
sufficiently small, the impulsive system admits a positive $\tau$-periodic solution that
is a perturbation of the unperturbed periodic orbit.
\end{proof}

\begin{theorem}[Global Stability]\label{thm:stability_sep}
Suppose the hypotheses Theorem~\ref{thm:existence_eng} hold and, moreover, that the
\(\tau\)-periodic wild mosquito free solution \((0,\overline{S}_2(t))\) satisfies the condition
\begin{equation}\label{globalstability}
    \overline{S}_2(t) \;>\; K\Big(1-\frac{\delta_1}{\psi_1}\Big)\qquad\text{for all } t\in\mathbb{R}.
\end{equation}
Then:
\begin{enumerate}
  \item[(i)] The periodic orbit \((0,\overline{S}_2(t))\) of the unperturbed system 
        (\(\varphi \equiv 0\)) is globally asymptotically stable in \(\mathbb{R}^2_+\).

  \item[(ii)] For \(\varphi\) sufficiently small, the unique \(\tau\)-periodic solution 
        \((S_1^{\varphi}(t), \overline{S}_2^{\varphi}(t))\) provided by 
        Theorem~\ref{thm:existence_eng} is asymptotically stable. 
        In particular, when \(\varphi\) is small, the component \(S_1^{\varphi}(t)\) remains 
        uniformly small for all \(t \ge 0\) and returns to a small neighborhood of zero after each 
        redistributive impulse, while \(\overline{S}_2^{\varphi}(t)\) stays close to 
        \(\overline{S}_2(t)\).
\end{enumerate}

\end{theorem}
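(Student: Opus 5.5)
For part (i) I would use a comparison argument in two stages: first show that $S_1$ is driven to zero, then transfer the convergence of the auxiliary system to $S_2$. With $\varphi\equiv0$, the $S_1$ equation gives
\[
\frac{dS_1}{dt}\le S_1\Big[\psi_1\Big(1-\frac{S_1+S_2}{K}\Big)-\delta_1\Big]\le S_1\Big[\psi_1\Big(1-\frac{S_2}{K}\Big)-\delta_1\Big],
\]
because the CI factor $\big(S_1+(1-\gamma)S_2\big)/(S_1+S_2)$ lies in $[0,1]$. Once the bracket is nonpositive we may drop $S_1$ from the carrying-capacity term, which only makes the bound larger. Condition \eqref{globalstability} reads $\psi_1(1-\bar S_2(t)/K)-\delta_1<0$ for all $t$. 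By continuity and periodicity of $\bar S_2$ there is therefore an $\varepsilon>0$ with
\[
\psi_1\Big(1-\frac{\bar S_2(t)-\varepsilon}{K}\Big)-\delta_1\le -\eta<0\quad\text{for all } t .
\]

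The next step is a lower bound $S_2(t)\ge \bar S_2(t)-\varepsilon$ for all large $t$. I would get it from the auxiliary system \eqref{eq:equation_9}. While $S_1$ is not yet small we only know $S_1\le M$ by Proposition~\ref{prop:boundedness}, so a direct comparison yields $S_2\ge$ the periodic solution of a logistic impulsive equation with reduced capacity, which is too weak. To avoid this I would argue by contradiction. Suppose $S_1$ does not tend to zero. Use the upper bound $\frac{dS_1}{dt}\le S_1[\psi_1(1-(S_1+S_2)/K)-\delta_1]$ together with $S_1+S_2$ dynamics, and show that the total $S=S_1+S_2$ is attracted by each release to a level where $S_2$ dominates. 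Concretely, I would first prove $\limsup S_1\le K(1-\delta_1/\psi_1)$ from the logistic bound. Then, for the $S_2$ equation, compare with $B$ solving $B'=B[\psi_2(1-(B+\xi(t))/K)-\delta_2]$ with the same impulses, where $\xi$ bounds $S_1$. Finally, iterate: a smaller bound on $S_1$ gives a larger lower bound on $S_2$, which forces $S_1$ to decay at rate $\eta$. A cleaner route is to show directly that the set where $S_1\le\epsilon_0$ is eventually entered, using that $\int_{n\tau}^{(n+1)\tau}[\psi_1(1-S_2/K)-\delta_1]\,dt<0$ whenever $S_2$ is near $\bar S_2$. I expect this bootstrapping to be the main obstacle, since \eqref{globalstability} only controls the periodic orbit and not arbitrary transients. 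I would try to close it using the global attractivity of $\bar B$ in Theorem~\ref{thm:thm_3} applied to the perturbed capacity $K-\epsilon_0$, together with continuity of $\bar B$ in $K$.

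Once $S_1(t)\le S_1(T)e^{-\eta(t-T)}\to0$, the $S_2$ equation is an asymptotically autonomous perturbation of \eqref{eq:equation_9}. Sandwiching $S_2$ between the solutions with capacity $K$ and $K-\epsilon$, and using the uniqueness and global stability of $\bar B$ from Theorem~\ref{thm:thm_3}, gives $|S_2(t)-\bar S_2(t)|\to0$. Local stability of the orbit follows from the Floquet multipliers of the linearization at $(0,\bar S_2)$. The $S_1$ multiplier is $\exp\int_0^\tau[\psi_1(1-\bar S_2/K)-\delta_1]dt<1$ by \eqref{globalstability}. The $S_2$ multiplier is $<1$ by the stability of $\bar B$.

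For part (ii) I would pass to the Poincaré (stroboscopic) map $P_\varphi$ over one period $[n\tau,(n+1)\tau]$, which includes both jump maps. $P_0$ has the fixed point $(0,\bar S_2(0^+))$ with derivative eigenvalues equal to the two multipliers above, both of modulus less than $1$. The map $P_\varphi$ depends smoothly on $\varphi$, since the flow is smooth away from $S_1+S_2=0$ and the jumps are linear in $\varphi$. By the implicit function theorem, $P_\varphi$ has a unique nearby fixed point, and it is hyperbolic and attracting for $\varphi$ small because the spectral radius depends continuously on $\varphi$. This fixed point is exactly the periodic solution of Theorem~\ref{thm:existence_eng}, which is therefore asymptotically stable. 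Its $S_1$ component is $O(\varphi)$: it is created at each redistribution by a jump of size $\varphi \bar S_2$ and then contracts at rate $\eta$ until the next impulse, which gives the stated uniform smallness and the return to a small neighborhood of zero after each impulse.
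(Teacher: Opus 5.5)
\textbf{Gap in part (i).} Your part (i) has a genuine gap at exactly the step you flag and then leave open. Nowhere is it shown that, for an \emph{arbitrary} nonnegative initial condition, $S_1$ eventually enters and stays in a small neighbourhood of $0$ (equivalently, that $S_2(t)\ge\bar S_2(t)-\varepsilon$ for all large $t$). Everything after ``Once $S_1(t)\le S_1(T)e^{-\eta(t-T)}$'' depends on this.
\begin{itemize}
\item The bootstrap ``bound $S_1\le\xi_k$, compare $S_2$ with a logistic of reduced capacity, deduce $\xi_{k+1}$'' is never shown to drive $\xi_k$ to $0$. Because it replaces $S_1$ by a sup-bound, it discards the phase relation between $S_1$ and $S_2$ within a period. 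Nothing prevents it from stalling at a positive fixed point of the bounding scheme.
\item The ``cleaner route'' and the comparison with capacity $K-\epsilon_0$ are circular: both presuppose $S_1\le\epsilon_0$ on a whole half-line.
\item The Floquet multipliers give only local stability.
\end{itemize}
The paper's proof does not supply this step either. From $S_2'\le\psi_2(1-S_2/K)S_2-\delta_2S_2$ it obtains only the upper bound $S_2<\bar B+\varepsilon$. It then claims that $\psi_1(1-S_2/K)<\delta_1$ follows from $\psi_1(1-(\bar B+\varepsilon)/K)<\delta_1$, which is the wrong direction, and substitutes $S_2\ge\hat K$ without proof. So you have located the real weak point, but your proposal does not repair it.

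\textbf{A possible repair.} For $\varphi\equiv0$ the vector field is competitive on $\{S_1+S_2>0\}$:
\begin{itemize}
\item $\partial f_2/\partial S_1=-\psi_2S_2/K\le0$;
\item $\partial f_1/\partial S_2=-S_1\psi_1\bigl[g/K+(1-S/K)\gamma S_1/S^2\bigr]\le0$, where for $S>K$ you use $g-\gamma S_1(S-K)/S^2\ge1-\gamma$.
\end{itemize}
The release is an order-preserving translation. So the stroboscopic map is an injective planar competitive map with positive Jacobian determinant, and by Smith's theory of planar competitive maps its bounded orbits converge to fixed points. Global stability then reduces to excluding a positive coexistence $\tau$-periodic solution $(p_1,p_2)$. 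Comparison gives $p_2\le\bar S_2$. Integrating $(\ln p_2)'$ over a period then yields $\int_0^\tau(p_1+p_2)\,dt\ge\int_0^\tau\bar S_2\,dt>\hat K\tau$. This contradicts $\int_0^\tau[\psi_1g(1-(p_1+p_2)/K)-\delta_1]\,dt=0$ whenever $g\equiv1$ or $p_1+p_2\le K$. The CI factor during excursions above $K$ needs extra care.

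\textbf{Minor slips.}
\begin{itemize}
\item The inequality $\psi_1g(1-S/K)\le\psi_1(1-S/K)$ fails when $S>K$; use $(1-S/K)^+$ instead.
\item On $S_1=0$ one has $g=1-\gamma$, so the $S_1$-multiplier is $\exp\int_0^\tau[(1-\gamma)\psi_1(1-\bar S_2/K)-\delta_1]\,dt$. It is still $<1$ under \eqref{globalstability}.
\end{itemize}

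\textbf{Part (ii).} Your argument via the stroboscopic map, the implicit function theorem and continuity of the spectrum is sound. It needs only these two multipliers, not the global claim of (i). It is more rigorous than the paper's argument, which simply asserts that the comparison bounds of (i) survive small $\varphi$.
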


\begin{proof}
\emph{(i) Stability of the wild–mosquito–free solution for $\varphi \equiv 0$.}

To prove this result, we employ an approach similar to that adopted in our previous work \cite{ALVES2026116517}.

From the second equation of the model, we obtain the estimate
\[
 \frac{dS_2}{dt}
 \;\le\;
 \psi_2\!\left(1-\frac{S_2}{K}\right)S_2 - \delta_2 S_2.
\]
Thus, we may apply the auxiliary system~\eqref{eq:equation_9} for comparison.  
By Theorem~\ref{thm:thm_3},
\[
B(t)\longrightarrow \bar{B}(t)
\qquad\text{as } t\to\infty.
\]
Hence, for any $\varepsilon>0$ sufficiently small there exists $t_1>0$ such that  
\[
B(t)<\bar{B}(t)+\varepsilon
\qquad\text{for all } t>t_1.
\]
By the Comparison Theorem (see, e.g., \cite{Laksh1989}, as well as our previous work \cite{ALVES2026116517}),

\begin{equation}\label{ineq:S2bound}
    S_2(t)\le B(t)<\bar{B}(t)+\varepsilon,
\qquad t>t_1.
\end{equation}

Next, using
\[
\frac{S_1+(1-\gamma)S_2}{S_1+S_2}\le 1,
\quad \gamma\in[0,1],
\]
we obtain
\[
\frac{dS_1}{dt}
\;\le\;
S_1\psi_1\!\left(1-\frac{S_2}{K}\right)-\delta_1 S_1.
\]
The equilibrium $S_1=0$ is stable whenever

\begin{equation}\label{ineq:condS1}
    \psi_1\!\left(1-\frac{S_2}{K}\right)<\delta_1.
\end{equation}

Using~\eqref{ineq:S2bound}, condition~\eqref{ineq:condS1} follows from
\[
\psi_1\!\left(1-\frac{\bar{B}(t)+\varepsilon}{K}\right)<\delta_1,
\qquad\forall t,
\]
which is   
\begin{equation*}
    \bar{B}(t)>K\left(1-\frac{\delta_1}{\psi_1}\right)=:\hat K.
\end{equation*}

Substituting \(S_2(t)\ge \hat K\) into the first equation of the system yields
\[
\frac{dS_1}{dt}
\le
S_1\psi_1\!\left(1-\frac{S_1+\hat K}{K}\right)-\delta_1 S_1.
\]
Consider the associated comparison system \(Z_1(t)\),
\begin{align*}
\begin{cases}
    \dfrac{dZ_1}{dt} = Z_1\psi_1\left(1-\dfrac{Z_1+\hat{K}}{K}\right)-\delta_1Z_1, \quad t \neq (n+l)\tau, \text{ } (n+1)\tau.\\
    Z_1(t^+) = Z_1(t)+\varphi(t)S_2, \quad t =(n+l)\tau.\\
    Z_1(t^+) = Z_1(t), \quad t =(n+1)\tau.
\end{cases}
\end{align*}
For which we know  
\(Z_1(t)\to 0\) as \(t\to\infty\).  
By comparison, for $\varepsilon>0$ sufficiently small there exists $t_2>t_1$ such that
\begin{equation}\label{ineq:S1small}
    S_1(t)\le Z_1(t)<\varepsilon,
\qquad t>t_2.
\end{equation}

Substituting~\eqref{ineq:S1small} into the second equation of the system~\eqref{eq:equation_6}-\eqref{eq:equation_8} and arguing as in  
Theorem~\ref{thm:thm_3}, we obtain the lower bound
\begin{equation}\label{ineq:s2bound2}
    \bar{B}(t)-\varepsilon \;\le\; S_2(t),
\qquad t>t_3>t_2.
\end{equation}

Combining~\eqref{ineq:S2bound}, \eqref{ineq:S1small}, \eqref{ineq:s2bound2} and letting $\varepsilon\to0$ yields
\[
S_1(t)\to0,
\qquad
S_2(t)\to\bar{B}(t)=:\overline{S}_2(t),
\qquad t\to\infty.
\]
Thus \((S_1(t),S_2(t))\to(0,\overline{S}_2(t))\) for every admissible initial condition.
Therefore the periodic orbit \((0,\overline{S}_2(t))\) is globally asymptotically stable in
\(\mathbb{R}^2_+\).

\emph{(ii) Stability for $\varphi$ small.}

By Theorem~\ref{thm:existence_eng}, for $\varphi$ sufficiently small the system admits a unique
$\tau$-periodic solution
\[
(S_1^\varphi(t),\,\overline{S}_2^\varphi(t))
\]
which depends continuously on $\varphi$ and satisfies
\[
(S_1^\varphi,\,\overline{S}_2^\varphi)\longrightarrow
(0,\,\overline{S}_2)
\qquad\text{as }\varphi\to0.
\]

Part \textit{(i)} shows that the orbit \((0,\overline{S}_2(t))\) attracts every solution of the
unperturbed system and that the attraction is uniform on bounded sets.
Since the vector field and the impulsive maps depend continuously on~$\varphi$,
the comparison arguments used in \textit{(i)} remain valid for $\varphi$ small:
all inequalities can be preserved by choosing $\varphi$ so that the perturbation terms are
bounded by an arbitrarily small $\varepsilon>0$.

Consequently,
\[
S_1^\varphi(t)\le \varepsilon
\qquad\text{for all }t\ge0
\]
and \(S_1^\varphi(t)\) returns to a neighborhood of zero after each impulse.
Similarly, the bounds for $S_2(t)$ obtained in \textit{(i)} persist under small perturbations, yielding
\[
\overline{S}_2(t)-\varepsilon
\;\le\;
\overline{S}_2^\varphi(t)
\;\le\;
\overline{S}_2(t)+\varepsilon.
\]

Thus the perturbed periodic orbit remains within an $\varepsilon$–tube of the unperturbed one
and attracts all nearby trajectories.  
Since $\varepsilon>0$ is arbitrary, the periodic solution
\((S_1^\varphi,\overline{S}_2^\varphi)\) is asymptotically stable for all sufficiently small $\varphi$.

\end{proof}

Theorem~\ref{thm:stability_sep} shows that, if the \(\tau\)-periodic profile of the infected
population \(\overline{S}_2(t)\) is sufficiently large (pointwise) relative to the threshold
\(K(1-\delta_1/\psi_1)\), then the wild population \(S_1\) cannot invade and the wild-free regime is
globally asymptotically stable. Moreover, this desirable regime is robust: small redistributive losses
of infection (modelled by \(\varphi\)) do not destroy the control outcome, which indicates the
biological feasibility and resilience of the proposed release strategy under moderate perturbations.

\subsection{A Method for Selecting $u_n$: A Sufficient Condition}\label{subs:subsec3.4}

Here, we use condition~\eqref{globalstability} to obtain a lower bound for $u_n$ that ensures stability. In other words, we determine a function $\eta(\tau)$ such that whenever the number of released infected mosquitoes satisfies $u_n \ge \eta(\tau)$, the solution remains stable around the wild-mosquito-free periodic orbit.

To achieve this, we need to guarantee that $\bar{S}_2(t) > \hat{K}$ for all $t \ge 0$. It is sufficient to require that the minimum value attained by $\bar{S}_2(t)$ within a single release cycle is greater than $\hat{K}$. We denote this minimum value by
\[
\bar{S}_2^{\min} = \min \bar{S}_2(t), \quad t \in (n\tau, (n+1)\tau].
\]
By examining the behavior of the solution over one period $\tau$, which is defined piecewise as
\begin{align*}
\bar{S}_2(t) =
\begin{cases}
\dfrac{\bar{K} B^+ e^{r_2 (t-n\tau)}}{B^+\left(e^{r_2 (t-n\tau)} - 1\right) + \bar{K}}, & n\tau < t \leq (n+l)\tau,\\[1em]
\dfrac{\bar{K} B^* e^{r_2 (t-(n+l)\tau)}}{B^*\left(e^{r_2 (t-(n+l)\tau)} - 1\right) + \bar{K}}, & (n+l)\tau < t \leq (n+1)\tau,
\end{cases}
\end{align*}
where $B^+$ and $B^*$ are given in~\eqref{eq:equation_11}, we find that the minimum value $\bar{S}_2^{\min}$ is attained at time $t = (n+l)\tau$. Therefore,
\[
\bar{S}_2((n+l)\tau) = \bar{S}_2^{\min} = Z^*.
\]

\subsubsection*{Derivation of the Sufficient Condition}

Since the minimum value of $\overline{S}_2(t)$ over each period is given by $Z^*$, imposing the threshold condition
\[
Z^* > \hat{K}
\]
and performing algebraic manipulations (similar to \cite{ALVES2026116517}) leads to a sufficient requirement on the release magnitude:
\[
u_n \ge \eta(\tau), \qquad \tau > 0,
\]
where the function $\eta(\tau)$ is
\begin{equation}\label{eq:equation_13}
\eta(\tau)
= \phi(\tau)\left(\dfrac{\phi(\tau)A - \bar{K}\big(e^{r_2 \tau}(1-\varphi)-1\big)}{\bar{K}+A\phi(\tau)}\right),
\qquad
\phi(\tau)
=\frac{\bar{K}\hat{K}}{\bar{K}(1-\varphi)e^{r_2 l\tau}-\hat{K}\big(e^{r_2 l\tau}-1\big)}.
\end{equation}

\subsubsection*{Establishing a Conservative Bound}

The function $\eta(\tau)$ may take negative values for certain $\tau>0$. Since biologically the release quantity must satisfy $u_n>0$, we restrict attention to values of $\tau$ for which $\eta(\tau)$ is positive. To obtain a uniform sufficient condition valid for all such $\tau$, we introduce the conservative requirement
\begin{equation}\label{eq:equation_14}
u_n > \max_{\tau>0} \eta(\tau) > 0.
\end{equation}
We now show that this maximum indeed exists.

\begin{proposition}\label{prop:propeta}
Let $\eta(\tau)$ and $\phi(\tau)$ be defined for $\tau\ge0$ by~\eqref{eq:equation_13}, with $A$ given by~\eqref{eq:A_def}, $0\le\varphi\le1$, and $l\in[0,1]$. Let $I$ be the maximal interval in $(0,\infty)$ on which the denominator of $\phi(\tau)$ is positive (so that $\phi>0$ on $I$). Then $\eta(\tau)$ attains a global maximum on $I$.
\end{proposition}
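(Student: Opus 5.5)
We will prove the statement by analysing $\eta$ at the two ends of $I$ and then using continuity. First we identify $I$. Write $D(\tau)=\bar K(1-\varphi)e^{r_2 l\tau}-\hat K(e^{r_2 l\tau}-1) = \hat K + \big(\bar K(1-\varphi)-\hat K\big)e^{r_2 l\tau}$. We expect $r_2=\psi_2-\delta_2>0$ and $\bar K=K(1-1/R_2)$ from \eqref{eq:A_def}. Two cases arise.
\begin{itemize}
\item If $\bar K(1-\varphi)\ge \hat K$ (or $l=0$), then $D>0$ for all $\tau>0$, so $I=(0,\infty)$.
\item If $\bar K(1-\varphi)<\hat K$ and $l>0$, then $D$ is strictly decreasing. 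It vanishes at $\tau^\ast=\frac{1}{r_2 l}\ln\frac{\hat K}{\hat K-\bar K(1-\varphi)}$, so $I=(0,\tau^\ast)$.
\end{itemize}
In either case $\phi$ is smooth and positive on $I$. Since $A$ should be positive, the denominator $\bar K+A\phi$ of $\eta$ never vanishes there. Hence $\eta$ is continuous on $I$, and it also extends continuously to $\tau=0$.

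\textbf{Boundary behaviour.} At $\tau\to0^+$ we get $\phi\to\hat K/(1-\varphi)$ (or $\phi\to\hat K$), so $\eta$ has a finite limit $\eta(0^+)$. Which of the two limits applies depends on the exact form of $A$ at $\tau=0$; we read it off from \eqref{eq:A_def}.

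At the right end we show $\eta\to-\infty$, treating the two cases separately.
\begin{itemize}
\item \emph{Case $I=(0,\tau^\ast)$.} As $\tau\to\tau^{\ast-}$ we have $\phi\to+\infty$, so
\[
\eta \sim \phi - \frac{\bar K\big(e^{r_2\tau^\ast}(1-\varphi)-1\big)}{A}+O(1/\phi)\cdot\phi .
\]
More carefully, $\eta=\phi\cdot\frac{A\phi-c}{\bar K+A\phi}$ with $c$ bounded, so $\eta\approx \phi-\frac{c+\bar K}{A}\to+\infty$. This is a problem: it would break attainment of the maximum. So the sign must be checked against the convention that $A$ depends on $\tau$. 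If $A=A(\tau)=\frac{e^{r_2\tau}-1}{\bar K}\cdot(\cdot)$, as in the logistic solution formula, the claimed maximum suggests that the intended conclusion uses the structure $\eta=\frac{\phi(A\phi-\bar K(e^{r_2\tau}(1-\varphi)-1))}{\bar K+A\phi}$ with the bracket turning sign. The first task is therefore to substitute \eqref{eq:A_def} explicitly and simplify $\eta$. We expect the true closed form, $u_n$ such that $B^+$ equals the required value, to be $\eta=\phi\big(1-(1-\varphi)e^{r_2\tau}\cdot\text{(logistic factor)}\big)$ up to positive factors. That form is bounded on $I$ whenever $\phi$ is.
\item \emph{Case $I=(0,\infty)$.} As $\tau\to\infty$, $\phi$ tends to the finite positive limit $\bar K\hat K/(\bar K(1-\varphi)-\hat K)$ (when $l>0$). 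The term $-\bar K e^{r_2\tau}(1-\varphi)$ dominates, so after simplification $\eta$ tends to a finite limit or to $-\infty$.
\end{itemize}

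\textbf{Conclusion.} Once $\eta$ is shown to be continuous on $I$, with finite limits at both ends or $-\infty$ at the right end, attainment follows by a standard argument. Suppose $\sup_I\eta$ exceeds both endpoint limits. Then some compact subinterval $[a,b]\subset I$ contains every point where $\eta$ is close to the supremum, and Weierstrass gives a maximizer there. If instead the supremum equals an endpoint limit, we argue via the derivative: we show $\eta'(0^+)>0$ (respectively, $\eta$ is eventually decreasing) from the explicit formula. Then the supremum is approached from the interior, which is again a contradiction unless it is attained.

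\textbf{Main obstacle.} The real difficulty is the behaviour of $\eta$ near the right endpoint $\tau^\ast$, where $\phi$ blows up. Showing that $\eta$ does not diverge to $+\infty$ there requires substituting the precise $A$ from \eqref{eq:A_def} and checking the sign of the leading coefficient. We would do this first, and adjust the interval, for instance by restricting to where $\eta>0$ as in \eqref{eq:equation_14}, if the blow-up turns out genuine.
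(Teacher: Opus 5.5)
Your outline (continuity of $\eta$ on $I$, a finite limit at $0^+$, control at the right end of $I$, then compactness) has the same skeleton as the paper's proof. However, you stop at the decisive step, and that step cannot be completed: your computation at $\tau^\ast$ is correct, and it refutes the statement.

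First, $\phi$ does not involve $A$. So $\phi(0^+)=\hat K/(1-\varphi)$ unambiguously, and $\eta(0^+)=\hat K\varphi/(1-\varphi)$. With $A$ from \eqref{eq:A_def}, $A$ is continuous and satisfies $A(\tau)\ge e^{r_2 l\tau}-1>0$ for $l>0$. Also $c(\tau)=\bar K\big((1-\varphi)e^{r_2\tau}-1\big)$ is bounded near $\tau^\ast$ (the paper's $\tau_s$). Writing
\[
\eta=\phi-\frac{\phi\,(c+\bar K)}{\bar K+A\phi},
\]
the subtracted term tends to the finite value $(c(\tau^\ast)+\bar K)/A(\tau^\ast)$ while $\phi\to+\infty$. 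Hence $\eta\to+\infty$ as $\tau\to\tau^{\ast-}$.

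No alternative closed form will change this. Formula \eqref{eq:equation_13} is exactly the $u_n$ solving the fixed-point quadratic \eqref{eq:equation_16} with $B^+=\phi(\tau)$. The blow-up also has a clear meaning: $\sup_{B^+>0}B^\ast=(1-\varphi)\bar K e^{r_2 l\tau}/(e^{r_2 l\tau}-1)$, and this equals $\hat K$ precisely when the denominator of $\phi$ vanishes. So as $\tau\uparrow\tau^\ast$ the required release becomes unbounded. Restricting to $\{\eta>0\}$ does not help, since the divergence is to $+\infty$.

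This is not a corner case. Under the standing assumption \eqref{eq:equation_3}, $\delta_2/\psi_2>\delta_1/\psi_1$, so $\bar K<\hat K$ and $\bar K(1-\varphi)<\hat K$ for every $\varphi$. Thus for every $l\in(0,1]$ you are in your second case: $I=(0,\tau^\ast)$ is bounded and $\eta$ is unbounded above on $I$. With the wMelPop values of Table~\ref{tab:tab_4} and $l=0.5$, one gets $\tau^\ast\approx 22$ days, and $\eta$ grows from roughly $2\times 10^2$ at $\tau=7$ to more than $6\times 10^3$ at $\tau=21$.

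For $l=0$ the maximum is not attained either. There $\phi\equiv\hat K/(1-\varphi)$ and $A=(1-\varphi)(e^{r_2\tau}-1)$. Then $\eta$ is strictly increasing in $\tau$, since its derivative in $A$ has the sign of $\hat K-\bar K$, and it tends to $\hat K/(1-\varphi)-\bar K$ without reaching it. So your fallback ``$\eta$ is eventually decreasing'' is unavailable as well.

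The proposition is therefore false as stated, and the paper's proof breaks exactly where you got stuck: it asserts $\eta\to-\infty$ as $\tau\to\tau_s^-$, the opposite of what happens. The paper's other case, $\bar K(1-\varphi)>\hat K$, is excluded by \eqref{eq:equation_3}. Even there, $\phi\to 0$ like $e^{-r_2 l\tau}$ (not to the positive constant you wrote). Moreover, for $0<l<1$ one gets $\eta\to-\bar K$ rather than $-\infty$, because the $A\phi$ term in the denominator cannot be dropped.

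What survives is only this: $\eta$ extends continuously to $[0,T]$ and attains a maximum there for every $T<\tau^\ast$. The supremum in \eqref{eq:equation_14} is $+\infty$.
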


\begin{proof}
The functions $\phi$ and $\eta$ are smooth on $I$, hence $\eta$ is continuous and differentiable on $I$. Moreover,
\[
\lim_{\tau\to0^+}\phi(\tau)=\frac{\hat K}{1-\varphi}, \qquad \eta(0)\ \text{is finite}.
\]

If $I=[0,\infty)$, i.e., if $\bar K(1-\varphi)-\hat K>0$, then
\[
\phi(\tau)\sim C e^{-r_2 l\tau}\quad(C>0),\qquad
A(\tau)\sim (1-\varphi)e^{r_2\tau}\quad(\tau\to\infty),
\]
so
\[
\eta(\tau)\sim -C(1-\varphi)e^{r_2\tau(1-l)}\to -\infty\qquad(\tau\to\infty),
\]
because $1-l>0$.

If instead $I=[0,\tau_s)$, where the denominator of $\phi$ vanishes at some finite $\tau_s$, then a local expansion near $\tau_s$ shows that $\phi(\tau)\to+\infty$ as $\tau\to\tau_s^-$, and again $\eta(\tau)\to -\infty$ in this limit.

In both situations, $\eta$ is continuous on $I$, finite at $0$, and tends to $-\infty$ at the boundary of $I$. Hence there exists $M\in I$ such that $\eta(\tau)<\eta(0)$ for all $\tau\ge M$, and therefore $\eta$ attains its global maximum on the compact set $[0,M]\cap I$.
\end{proof}

With these results, we establish a uniform threshold for the release of \textit{Wolbachia}-infected mosquitoes, valid for any release period $\tau>0$. For values of $\tau$ for which $\eta(\tau)$ is positive, the function $\eta(\tau)$ itself provides a sufficient release quantity. In cases where $\eta(\tau)$ becomes negative, we instead use the global maximum of $\eta$, whose existence is guaranteed by Proposition~\ref{prop:propeta}. This maximum acts as a conservative bound that applies uniformly for all $\tau>0$.

In this way, for different parameter sets possibly varying according to the \textit{Wolbachia} strain considered we can determine a release quantity that maintains the infection over time while simultaneously driving the wild mosquito population to levels close to zero.

In the next section, we present numerical simulations that illustrate and validate the results obtained here. We also compare different \textit{Wolbachia} strains and analyze the influence of high temperatures on the loss of infection.

\section{Numerical Simulations}
This section presents the numerical results obtained for the three \textit{Wolbachia} strains considered in this study, namely wMelPop, wMel, and wAlbB, with the aim of assessing the feasibility and robustness of \textit{Wolbachia}-based mosquito control strategies under temperature-dependent infection loss. The numerical simulations allow us to evaluate how strain-specific biological traits influence invasion success, persistence, and long-term population outcomes in environmentally stressful conditions.

For all simulations, a release period of 365 days is considered, followed by an analysis of the post-release dynamics over a time horizon of 1095 days (three years), enabling the investigation of both short-term intervention effects and long-term post-release dynamics. The complete set of parameters employed in the simulations is provided in Tables~\ref{tab:tab_3} and~\ref{tab:tab_4}, with strain-specific parameter values summarized in Table~\ref{tab:tab_4}. These parameters ensure consistency with empirical studies and facilitate reproducibility.

\begin{table}[h!]
\centering
{\footnotesize
\caption{Multiplicative factors applied to the baseline demographic parameters of the wild mosquito population. 
The parameters for \textit{Wolbachia}-infected mosquitoes are computed as
\(
\psi_2 = \alpha_\psi \psi_1, \text{ and } \delta_2 = \alpha_\delta \delta_1,
\)
where $\alpha_\psi$ and $\alpha_\delta$ are the strain-specific multiplicative factors listed in this table.}
\label{tab:tab_2}
\begin{tabular}{lccc}
\hline
& \multicolumn{2}{c}{\textbf{Multiplicative factor}} & \\
\hline
\textbf{\textit{Wolbachia} strain} & \textbf{Birth rate $\alpha_\psi$} & \textbf{Mortality rate $\alpha_\delta$} & \textbf{References} \\
\hline
wMelPop & 0.50 & 1.50 & Assumed \\
wMel    & 0.95 & 1.10 & \cite{xue2018comparing} \\
wAlbB   & 0.85 & 1.00 & \cite{xue2018comparing} \\
Wild    & 1.00 & 1.00 & \cite{xue2018comparing} \\
\hline
\end{tabular}}
\end{table}

The analysis focuses on how strain-dependent differences in heat tolerance, virus-blocking efficiency, and fitness effects on the host mosquito are incorporated into the model through multiplicative factors applied to the demographic parameters of the wild mosquito population. These factors, reported in Table~\ref{tab:tab_2}, are derived from the biological characteristics listed in Table~\ref{tab:tab_1} and play a central role in determining the persistence of \textit{Wolbachia}-infected populations under temperature-induced infection loss.

The multiplicative factors associated with the wMel and wAlbB strains are taken directly from Table~3 of \cite{xue2018comparing}. In contrast, the values assumed for wMelPop reflect its well-documented severe fitness costs. Experimental studies report strong life-shortening effects and high egg mortality associated with this strain, resulting in substantially reduced survival and reproductive performance \cite{ritchie2015application, axford2016fitness}. These biological trade-offs are consistently discussed in both experimental and modeling studies comparing \textit{Wolbachia} strains and support the use of more extreme multiplicative factors for wMelPop in the absence of unified quantitative estimates \cite{ross2021designing, orozco2024comparing}.
\begin{table}[h!]
\centering
{\footnotesize
\caption{Common parameters used in the impulsive population-dynamic model.}
\label{tab:tab_3}
\begin{tabular}{lccc}
\hline
\textbf{Description} & \textbf{Parameter} & \textbf{Value} & \textbf{Range} \\
\hline
Relative timing of infection loss within the release period $\tau$ & $l$ & 0.5 & [0, 1] \\  
Time of the year corresponding to the temperature peak & $t_{\text{peak}}$ & 180 & -- \\
Carrying capacity for the mosquito population & $K$ & 500 & -- \\
\hline
\end{tabular}}
\end{table}

The numerical analysis is divided into two parts. The first part investigates the system dynamics during the release period and is used to validate the theoretical results derived in the previous section. Through numerical simulations, we show that releasing \textit{Wolbachia}-infected mosquitoes above the corresponding stability threshold leads to suppression of the wild mosquito population and the establishment of a stable infected population. To further support the robustness of the model, these simulations are performed under four distinct initial conditions, indicating that the observed dynamics and invasion outcomes are independent of the initial population configuration. In this phase, we also analyze the interactions among the release interval~$\tau$, the maximum infection-loss intensity~$\varphi_{\max}$, and the release amplitudes~$u_n$ for each parameter set, highlighting how these factors jointly affect invasion success.

The second part focuses on post-intervention dynamics, examining system behavior after the releases are halted and assessing how each \textit{Wolbachia} strain responds to environmental stress in the absence of ongoing control efforts. This analysis provides insights into the long-term sustainability of \textit{Wolbachia}-based interventions and the relative resilience of different strains under adverse temperature conditions.
\begin{table}[h!]
\centering
{\footnotesize
\caption{The values of model parameters for the wild mosquito population and for mosquitoes infected with different \textit{Wolbachia} strains.}
\label{tab:tab_4}
\begin{tabular}{lcccc}
\hline
\textbf{Description} & \textbf{Parameter} & \textbf{Value} & \textbf{Units} & \textbf{Range} \\
\hline
\multicolumn{5}{l}{\textbf{Wild mosquitoes $S_1$}} \\
Birth rate & $\psi_1$ & 0.300 & day$^{-1}$ & -- \\
Death rate & $\delta_1$ & 0.071 & day$^{-1}$ & -- \\
\hline
\multicolumn{5}{l}{\textbf{wMelPop-infected mosquitoes $S_2$}} \\
Birth rate & $\psi_2$ & 0.150 & day$^{-1}$ & -- \\
Death rate & $\delta_2$ & 0.106 & day$^{-1}$ & -- \\
CI probability & $\gamma$ & 0.90 & -- & $[0.9,\,1]$ \\
\hline
\multicolumn{5}{l}{\textbf{wMel-infected mosquitoes $S_2$}} \\
Birth rate & $\psi_2$ & 0.285 & day$^{-1}$ & -- \\
Death rate & $\delta_2$ & 0.0781 & day$^{-1}$ & -- \\
CI probability & $\gamma$ & 0.98 & -- & $[0.98,\,1]$ \\
\hline
\multicolumn{5}{l}{\textbf{wAlbB-infected mosquitoes $S_2$}} \\
Birth rate & $\psi_2$ & 0.255 & day$^{-1}$ & -- \\
Death rate & $\delta_2$ & 0.071 & day$^{-1}$ & -- \\
CI probability & $\gamma$ & 0.90 & -- & $[0.9,\,1]$ \\
\hline
\end{tabular}}
\end{table}

In particular, Table~\ref{tab:tab_4} reports the strain-specific parameters obtained by applying the multiplicative factors to the baseline demographic rates of the wild mosquito population \cite{xue2018comparing}. The parameter associated with cytoplasmic incompatibility is taken directly from \cite{orozco2024comparing}, ensuring consistency with empirical observations and enabling a coherent and biologically grounded comparison among the different \textit{Wolbachia} strains.

\subsection{Dynamics during the release phase}

\subsubsection*{Case 1: $\tau = 7$}

Figure~\ref{fig:fig_3} illustrates the population dynamics of wild and \textit{\textit{Wolbachia}}-infected mosquitoes under periodic releases every $\tau = 7$ days, using three parameter sets corresponding to the strains wMelPop (panel a), wMel (panel b), and wAlbB (panel c). In all cases, the release amplitudes satisfy the sufficient condition $u_n > \eta(7)$, ensuring that the infected population grows sufficiently to suppress the wild mosquitoes during the release period.

The simulations incorporate strain-specific infection-loss intensities, with $\varphi_{\max} = 0.0015$ for wMelPop, $\varphi_{\max} = 0.001$ for wMel, and $\varphi_{\max} = 0.0005$ for wAlbB, in agreement with empirical and modeling studies indicating that wMelPop is the most sensitive to elevated temperatures. Although the temperature-dependent function $\varphi(t)$ induces temporary increases in the wild population, these fluctuations remain limited and reflect only mild infection loss during temperature peaks.

Because wMelPop also imposes a strong fitness cost on the host mosquito, the corresponding threshold value is substantially higher, with $\eta(7) \approx 213$, compared to the smaller release amplitudes required for the other strains ($\eta(7) \approx 38$ for wMel and $\eta(7) \approx 33$ for wAlbB). Despite these differences, no significant loss of infection is observed for any of the three strains over the 365-day release period, highlighting the stabilizing effect of the intervention strategy.

\begin{figure}[H]
    \centering
    \includegraphics[width=1.0\linewidth]{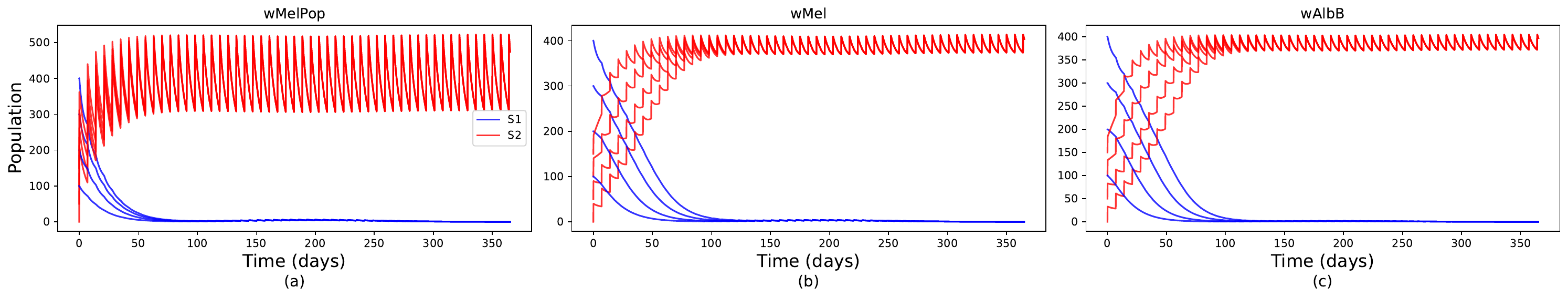}
    \caption{Wild and infected mosquito dynamics under periodic releases ($\tau = 7$ days) for the strains wMelpop (a), wMel (b), and wAlbB (c), simulated under four distinct initial conditions. All release amplitudes satisfy $u_n > \eta(7)$, ensuring dominance of the infected population. Infection-loss intensities are $\varphi_{\max} = 0.0015$, $0.001$, and $0.0005$ for the respective strains.}
 \label{fig:fig_3}
\end{figure} 
\subsubsection*{Case 2: $\tau = 7$ and increasing $\varphi_{\max}$}

Figure~\ref{fig:fig_4} considers the same release regime as in Case~1, but with increased intensities of temperature-induced infection loss, namely $\varphi_{\max} = 0.015,\; 0.01,\; 0.005$ for wMelPop, wMel, and wAlbB, respectively. Under these larger values, the impact of temperature becomes more pronounced, resulting in a faster decay of the infected population between releases.

These results highlight the critical role of the maximum loss rate in determining both the minimum release effort required and the overall feasibility of achieving population replacement under adverse thermal conditions.

\begin{figure}[H]
    \centering
    \includegraphics[width=1.0\linewidth]{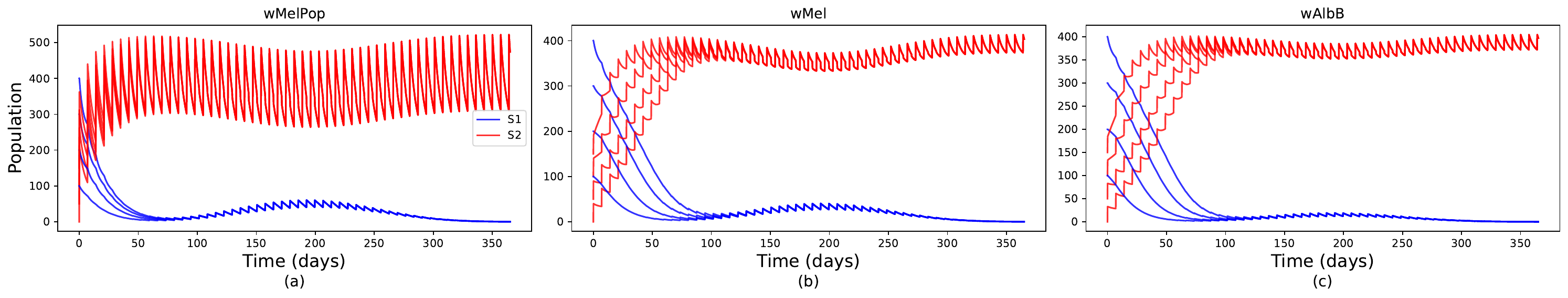}
   \caption{Dynamics of the wild and \textit{Wolbachia}-infected mosquito populations for the three \textit{Wolbachia} strains (wMelpop, wMel, and wAlbB), simulated under four distinct initial conditions and increased temperature-induced infection loss. The simulations consider $\tau = 7$ and higher loss rates, with $\varphi_{\max} = 0.015,; 0.01,; 0.005$ for wMelpop, wMel, and wAlbB, respectively. Larger values of $\varphi_{\max}$ make the thermal decay of infection more pronounced between releases, revealing the effect of stronger temperature stress on \textit{Wolbachia} persistence.}
\label{fig:fig_4}
\end{figure}

\subsubsection*{Case 3: $\tau = 14$}

Figure~\ref{fig:fig_5} presents the population dynamics obtained by increasing the release period to $\tau = 14$, while keeping the same infection-loss intensities used previously, namely $\varphi_{\max} = 0.0015$, $0.001$, and $0.0005$ for wMelPop, wMel, and wAlbB, respectively. Under this longer release interval, the release amplitudes $u_n$ must be increased to satisfy the sufficient condition $u_n > \eta(14)$. Compared with the results for $\tau = 7$ shown in Figure~\ref{fig:fig_3}, a more pronounced loss of infection is observed across all strains.

This behavior arises from the combined effect of a longer time between interventions and the temperature-dependent infection loss. As $\tau$ increases, the infected population remains exposed to thermal decay for a longer period before the next release, allowing infection loss to accumulate. Moreover, since the transition from infected to wild mosquitoes is proportional to the size of the infected population, higher infected densities lead to larger absolute losses during temperature peaks. Consequently, even for fixed values of $\varphi_{\max}$, increasing the release period amplifies the impact of thermal stress on the persistence of \textit{Wolbachia}.
\begin{figure}[H]
    \centering
    \includegraphics[width=1.0\linewidth]{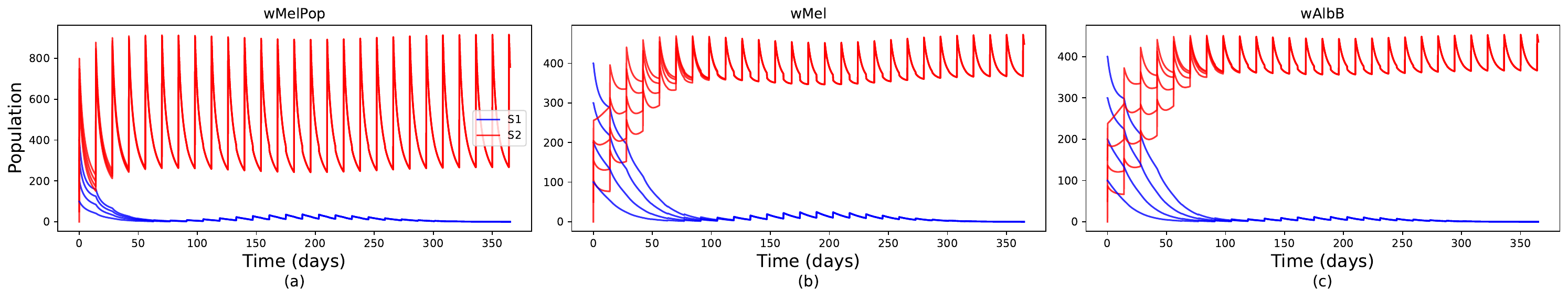}
    \caption{Dynamics of wild and \textit{Wolbachia}-infected mosquitoes, simulated under four distinct initial conditions, with higher infection-loss intensities: $\varphi_{\max} = 0.0015$ (wMelPop), $0.001$ (wMel), and $0.0005$ (wAlbB), and release interval $\tau = 14$. Compared with Figure~\ref{fig:fig_3}, temperature-induced infection loss becomes more apparent, although periodic releases continue to sustain the infected population.}
    \label{fig:fig_5}
\end{figure}

Having analyzed the population dynamics during the release period, we now turn to the second part of our numerical study, where we investigate what happens once the release interventions stop. This analysis allows us to evaluate the long-term persistence of \textit{Wolbachia} under temperature-dependent infection loss and to determine how each strain responds when external introductions are no longer provided.

\subsection{Dynamics after the releases phase}

What happens to the population dynamics once the releases of infected mosquitoes stop?

In Figures~\ref{fig:fig_6}--\ref{fig:fig_8}, we examine the system’s behavior over three years, assuming that \textit{Wolbachia}-infected females are released only during the first year. After this initial intervention phase, the system evolves without further release impulses, and the only remaining impulses are those associated with temperature-induced infection loss. For this analysis, we set $\tau = 14$ and explore three scenarios by varying the parameter $\varphi_{\max}$, which controls the intensity of thermal loss of infection.

\subsubsection*{Scenario 1: low infection loss}
 
In this first scenario, we consider $\varphi_{\max} = 0.0015$, $0.001$, and $0.0005$ for wMelpop, wMel, and wAlbB, respectively. As shown in Figure~\ref{fig:fig_6}, these values represent relatively weak temperature-induced loss. Under these conditions, once the releases end, the system maintains a stable configuration for all three \textit{Wolbachia} strains: the wild population remains suppressed, and the infected population stabilizes at a positive equilibrium. This indicates that, for sufficiently low thermal loss rates, the equilibrium achieved during the release period persists even in the absence of continued interventions, despite the seasonal temperature fluctuations encoded in $\varphi(t)$.
\begin{figure}[H]
    \centering
    \includegraphics[width=1.0\linewidth]{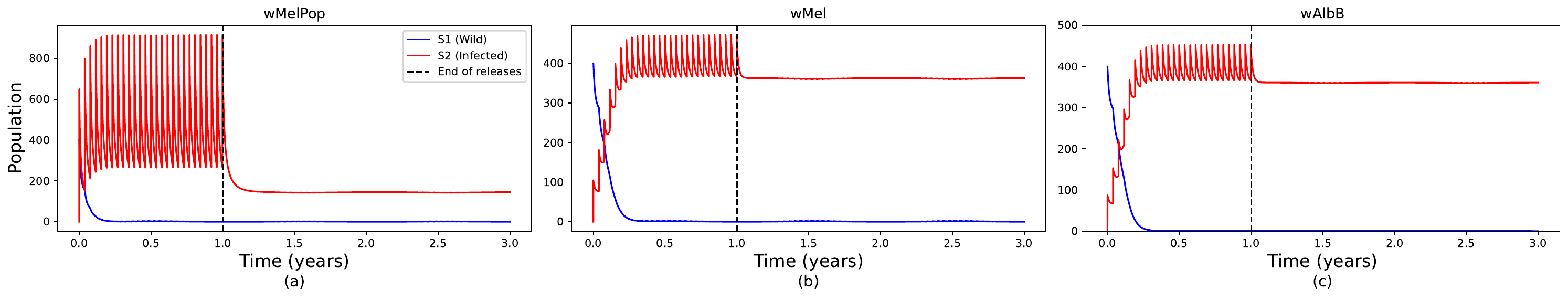}
    \caption{Population dynamics of wild and \textit{Wolbachia}-infected mosquitoes under low temperature-induced infection loss after releases stop. The simulations consider $\tau = 14$ and $\varphi_{\max} = 0.0015$, $0.001$, and $0.0005$ for wMelPop, wMel, and wAlbB, respectively. Releases occur only during the first year, and thereafter the system evolves under temperature-driven impulses alone. All three strains maintain a stable infected population and suppress the wild population throughout the three years.}
 \label{fig:fig_6}
\end{figure}
\subsubsection*{Scenario 2: moderate infection loss}  
In this scenario, we increase the infection-loss intensities to $\varphi_{\max} = 0.015$, $0.01$, and $0.005$ for wMelPop, wMel, and wAlbB, respectively. As shown in Figure~\ref{fig:fig_7}, the stronger temperature-induced loss leads to a more noticeable oscillatory pattern in the population trajectories, reflecting the increased periodicity driven by thermal stress. Despite these larger fluctuations, the system still converges to a stable configuration over the three-year window: the wild population remains controlled, and the infected population persists at positive levels for all three strains. This indicates that, under moderate thermal loss, the impulsive releases carried out during the first year are sufficient to sustain long-term \textit{Wolbachia} establishment.

\subsubsection*{Scenario 3: high infection loss}  
In the third scenario, we further increase the temperature-induced loss to $\varphi_{\max} = 0.15$, $0.1$, and $0.05$ for wMelPop, wMel, and wAlbB, respectively. Figure~\ref{fig:fig_8} reveals the most pronounced differences among the three strains. For wMelPop and wMel, the high infection-loss rates prevent the infected population from remaining established after the release period, leading to a rapid decline of \textit{Wolbachia} prevalence and a subsequent recovery of the wild population. In contrast, wAlbB remains substantially more resilient: although the infected population experiences stronger seasonal fluctuations, it still maintains a stable positive level throughout the post-release years. This outcome underscores the strong influence of thermal tolerance on long-term infection persistence once releases cease.

\begin{figure}[H]
    \centering
    \includegraphics[width=1.0\linewidth]{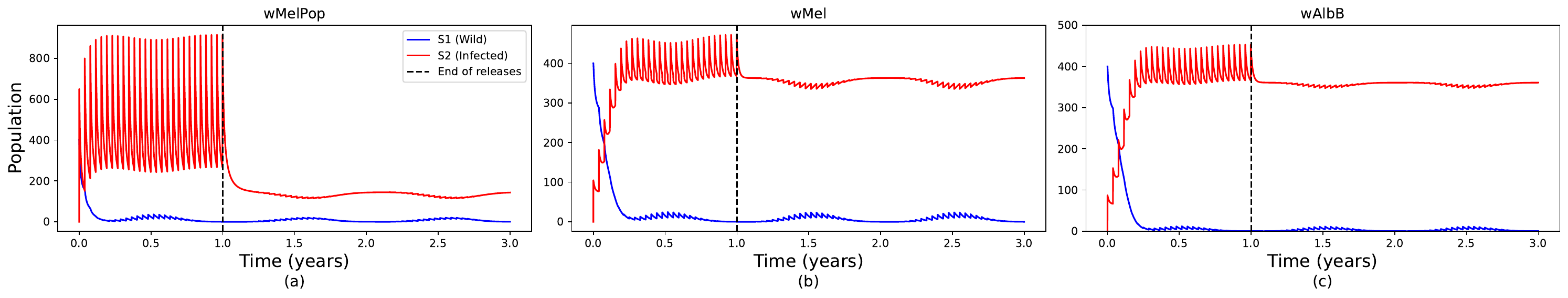}
   \caption{Population dynamics under moderate temperature-induced infection loss with $\tau = 14$ and $\varphi_{\max} = 0.015$, $0.01$, and $0.005$ for wMelPop, wMel, and wAlbB, respectively. After the end of releases (year 1), the system exhibits stronger periodic fluctuations due to increased thermal decay of infection. Despite these oscillations, all three strains remain stable and continue to suppress the wild population over the three-year horizon.}
 \label{fig:fig_7}
\end{figure}
Across the three scenarios, the long-term persistence of \textit{Wolbachia} after the end of the releases is strongly shaped by the magnitude of temperature-induced infection loss. Under low and moderate thermal stress (Scenarios~1 and~2), all three strains maintain a stable infected population and keep the wild population suppressed throughout the three-year period. However, when the loss rate becomes high (Scenario~3), the differences between strains become pronounced: wMelPop and wMel are unable to withstand the strong thermal pressure and rapidly lose their infected populations once releases stop, whereas wAlbB remains resilient and maintains infection persistence. These results highlight the combined importance of thermal tolerance and post-release dynamics, emphasizing that strain choice plays a decisive role in determining whether \textit{Wolbachia} can remain established without continued intervention.

\begin{figure}[H]
    \centering
    \includegraphics[width=1.0\linewidth]{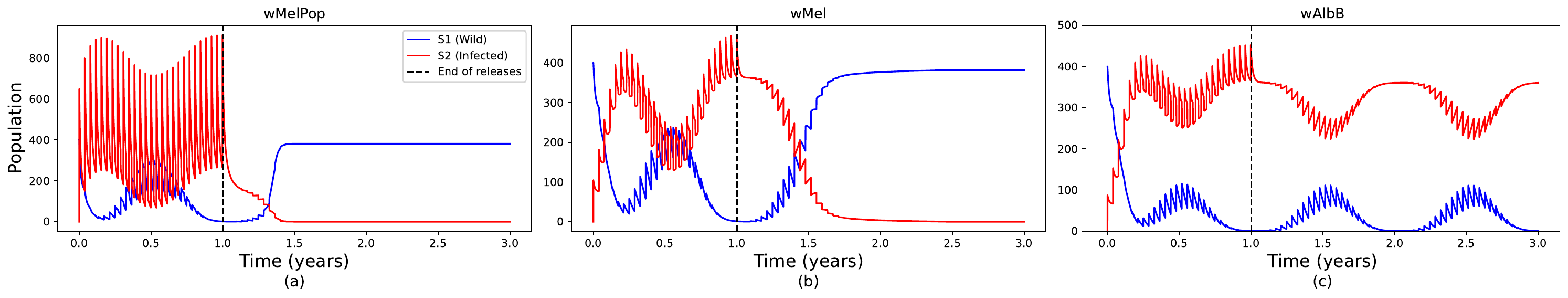}
    \caption{Population dynamics under high temperature-induced infection loss with $\tau = 14$ and $\varphi_{\max} = 0.15$, $0.1$, and $0.05$ for wMelPop, wMel, and wAlbB, respectively. In this scenario, wMelPop and wMel fail to maintain \textit{Wolbachia} infection after releases cease, leading to loss of the infected population and resurgence of the wild mosquitoes. In contrast, wAlbB remains stable despite stronger seasonal fluctuations, demonstrating its superior thermal tolerance.}
\label{fig:fig_8}
\end{figure}

Our numerical findings are also consistent with field and laboratory observations reported in the literature, which show that exposure to high temperatures can temporarily reduce \textit{Wolbachia} density and weaken cytoplasmic incompatibility, particularly for heat-sensitive strains such as wMel. Empirical studies document substantial drops in infection frequencies during heatwaves and severe reductions in bacterial density when immature stages develop under elevated temperatures, followed by gradual recovery once conditions normalize \cite{ross2019loss,gunasekaran2022sensitivity,ross2017Wolbachia}. This behavior, characterized by episodic loss of infection accompanied by post-stress reestablishment, motivated the redistributive impulsive event incorporated into our model. Across Scenarios 1–3, our simulations reproduce this pattern: when thermal stress is low or moderate, all strains recover after temperature-induced losses, whereas under intense stress, only wAlbB can regain and maintain high infection levels. This agreement between empirical evidence and model outcomes reinforces the relevance of including temperature-driven infection loss in predicting long-term \textit{Wolbachia} persistence.

\section{Final Remarks}

This work combined analytical results and numerical simulations to investigate impulsive release strategies of \textit{Aedes aegypti} mosquitoes infected with \textit{Wolbachia} under temperature-dependent infection loss. The results showed that long-term infection persistence depends on the interaction between strain-specific biological characteristics, thermal stress intensity, and operational release parameters.

The simulations confirmed the theoretical threshold conditions for the establishment of the infected population. Although all strains can suppress the wild population when release amplitudes exceed the critical threshold, biological differences significantly influence the practical feasibility of release strategies. In particular, the wMelPop strain presented higher fitness costs and required more intense releases, whereas wMel and, especially, wAlbB achieved population replacement with lower release effort.

Furthermore, increasing the release interval and the infection-loss intensity was shown to increase the difficulty of maintaining infection, highlighting the importance of jointly considering environmental and operational factors when planning interventions. After the end of releases, only strains with greater thermal tolerance, such as wAlbB, were able to maintain infection under conditions of higher thermal stress.

Overall, the results indicate that impulsive release strategies can successfully establish \textit{Wolbachia} in \textit{Aedes aegypti} populations, provided that strain selection, release planning, and environmental conditions are properly considered. It is important to emphasize, however, that the proposed mathematical model represents a simplified description of reality, based on assumptions designed to capture the main biological and environmental mechanisms involved in the process. Nevertheless, the obtained results provide relevant insights into the system dynamics and may contribute to the planning and evaluation of intervention strategies.

Future work may include the incorporation of spatial heterogeneity, stochastic temperature variations, and the development of optimal control approaches to further improve the design of release strategies.

\appendix
\section*{Appendix}
\addcontentsline{toc}{section}{Appendix}
\section{Proof of Theorem \ref{thm:existence_uniqueness}}\label{appendix:A}
\begin{proof}
To prove the theorem, we proceed in some steps.

\medskip
\noindent{(I) Local Lipschitz continuity.}

Define
\begin{align*}
f_1(S_1,S_2) &= S_1\psi_1\left(1 - \dfrac{S_1+S_2}{K}\right)\left(\dfrac{S_1+(1-\gamma)S_2}{S_1+S_2}\right) - \delta_1 S_1, \\
f_2(S_1,S_2) &= S_2\psi_2\left(1 - \dfrac{S_1+S_2}{K}\right) - \delta_2 S_2.
\end{align*}

Let $f = (f_1,f_2)^\top$ and consider the domain 
\[
D = \{(S_1,S_2) \in \mathbb{R}^2_{+} : S_1 + S_2 > 0\}.
\]
In $D$, both $f_1$ and $f_2$ are rational functions formed by sums, products, and quotients of polynomials. Hence, they are $C^\infty$ on $D$.  

The Jacobian matrix of $f$ is given by

\begin{align}\label{jacobi1}
    J(S_1,S_2) =
\begin{bmatrix}
\dfrac{\partial f_1}{\partial S_1} & \dfrac{\partial f_1}{\partial S_2} \\[6pt]
\dfrac{\partial f_2}{\partial S_1} & \dfrac{\partial f_2}{\partial S_2}
\end{bmatrix},
\end{align}

where
\begin{align}\label{jacobi2}
\frac{\partial f_1}{\partial S_1} &= 
\psi_1\left(1 - \frac{S_1 + S_2}{K}\right)\left[\left(\frac{S_1 + (1 - \gamma)S_2}{S_1 + S_2}\right) + \frac{\gamma S_1 S_2}{(S_1 + S_2)^2}\right]
- \frac{\psi_1 S_1}{K}\left(\frac{S_1 + (1 - \gamma)S_2}{S_1 + S_2}\right) - \delta_1, \\\nonumber
\frac{\partial f_1}{\partial S_2} &= 
S_1\psi_1\left[-\frac{1}{K}\left(\frac{S_1 + (1 - \gamma)S_2}{S_1 + S_2}\right)
- \left(1 - \frac{S_1 + S_2}{K}\right)\frac{\gamma S_1}{(S_1 + S_2)^2}\right], \\\nonumber
\frac{\partial f_2}{\partial S_1} &= -\frac{\psi_2 S_2}{K}, \\\nonumber
\frac{\partial f_2}{\partial S_2} &= \psi_2\left(1 - \frac{S_1 + 2S_2}{K}\right) - \delta_2.
\end{align}

All entries of $J(S_1,S_2)$ are continuous in $D$, so $f$ is locally Lipschitz on every compact subset of $D$.  
Therefore, by the Picard–Lindelöf theorem, for any initial condition $(S_1(0), S_2(0)) \in D$, there exists a time $t_{\max} > 0$ and a unique solution $(S_1(t), S_2(t))$ defined on $[0, t_{\max})$.

\medskip
\noindent{(II) Positive invariance of $\mathbb{R}_+^2$.}

If $S_1(0) = 0$ and $S_2(0) \ge 0$, then $f_1(0,S_2) = 0$, which implies $\frac{dS_1}{dt}(0) = 0$ and thus $S_1(t) \ge 0$ for $t > 0$.  
Similarly, if $S_2(0) = 0$ and $S_1(0) \ge 0$, then $f_2(S_1,0) = 0$, implying $\frac{dS_2}{dt}(0) = 0$.  
Hence, trajectories starting in $\mathbb{R}_+^2$ remain there, since the flow is tangent or directed inward along the coordinate axes.  
Therefore, the nonnegative orthant $\mathbb{R}_+^2$ is positively invariant.

\medskip
\noindent{(III) Boundedness of solutions.}

Let $S(t) = S_1(t) + S_2(t)$. Then,
\begin{align*}
\frac{dS}{dt} &= f_1 + f_2 \\
&= S_1\psi_1\left(1 - \frac{S}{K}\right)\left(\frac{S_1 + (1-\gamma)S_2}{S_1 + S_2}\right)
+ S_2\psi_2\left(1 - \frac{S}{K}\right)
- \delta_1 S_1 - \delta_2 S_2.
\end{align*}
Since $0 \le \frac{S_1 + (1-\gamma)S_2}{S_1 + S_2} \le 1$, we obtain
\[
\frac{dS}{dt} \le \psi S\left(1 - \frac{S}{K}\right) - \delta S,
\]
where $\psi = \max\{\psi_1, \psi_2\}$ and $\delta = \min\{\delta_1, \delta_2\}$.

To analyze this inequality, consider the comparison system
\begin{align*}
\frac{dy}{dt} &= \psi y\left(1 - \frac{y}{K}\right) - \delta y,\\
y(0) &= S(0).
\end{align*}
This equation can be written as
\[
\frac{dy}{dt} = y(r - a y),
\quad \text{where } r = \psi - \delta, \quad a = \frac{\psi}{K}.
\]
This is a logistic equation with an analytical solution
\[
y(t) = \frac{y(0)r e^{rt}}{a y(0)(e^{rt} - 1) + r},
\]
which satisfies $0 < y(t) < K\left(1 - \frac{\delta}{\psi}\right) =: y^*$ for all $t > 0$ provided $\psi > \delta$.
By the Comparison Lemma, $S(t) \le y(t)$ for all $t$ in the domain of existence.
Hence, $S(t)$ is uniformly bounded by $\max\{S(0), y^*, K\}$. Finally, since $S(t)$ remains uniformly bounded for all $t \ge 0$, the solution
$(S_1(t),S_2(t))$ stays in a compact subset of $\mathbb{R}^2_+$. Because the
vector field is locally Lipschitz, no finite–time blow-up can occur, and the
solution can be extended indefinitely. Therefore, $(S_1(t), S_2(t))$ exists for
all $t \ge 0$.

\noindent{(IV) Solution for all $t\geq 0$.}

From the above results, for any nonnegative initial condition, there exists a unique global solution $(S_1(t),S_2(t)) \in \mathbb{R}^2_+$ for all $t \ge 0$,
and both components remain bounded.
\end{proof}

\begin{remark}
    \medskip
\noindent{Behavior at the origin $(0,0)$.}

The function
\[
g(S_1,S_2) = \frac{S_1 + (1-\gamma)S_2}{S_1 + S_2}
\]
is undefined at $(0,0)$. However, its limits along the coordinate axes are
\[
\lim_{S_2 \to 0^+} g(S_1,S_2) = 1, \qquad
\lim_{S_1 \to 0^+} g(S_1,S_2) = 1 - \gamma.
\]
Thus, $g$ admits a continuous extension to $(0,0)$ by defining, for example,
\[
g(0,0) = 1,
\]
or any value in the interval $[1-\gamma,\,1]$.
With this extension, $f$ becomes continuous on $\mathbb{R}^2_+$,
and the trivial solution $(S_1,S_2) \equiv (0,0)$ is an equilibrium point.
The uniqueness and positivity of nontrivial solutions are not affected by this extension.
\end{remark}

\section{Proof of Theorem \ref{thm:thm_1}}\label{appendix:B}

\begin{proof}
The equilibrium points of system~\eqref{eq:equation_1} are obtained by setting
\[
\dfrac{dS_1}{dt} = \dfrac{dS_2}{dt} = 0,
\]
which yields the following system of algebraic equations:
\begin{align}
    S_1\psi_1\left(1 - \dfrac{1}{K}(S_1 + S_2) \right) \left(\dfrac{S_1+(1-\gamma)S_2}{S_1+S_2} \right) - \delta_1 S_1 &= 0, \label{eq:primeiraeq} \\
    S_2\psi_2 \left(1 - \dfrac{1}{K}(S_1 + S_2) \right) - \delta_2 S_2 &= 0. \label{eq:segundaeq}
\end{align}

From equation~\eqref{eq:segundaeq}, it follows that either $S_2 = 0$ or
\[
\psi_2 \left(1 - \dfrac{1}{K}(S_1 + S_2) \right) = \delta_2.
\]
In the latter case, we obtain
\begin{equation}\label{eq:sestrela}
    S^* = S_1 + S_2 = K\left(1 - \dfrac{\delta_2}{\psi_2} \right),
\end{equation}
which allows us to express $S_2$ as a function of $S_1$:
\[
S_2 = K\left(1 - \dfrac{\delta_2}{\psi_2} \right) - S_1.
\]

By substituting this expression into equation~\eqref{eq:primeiraeq} and solving for $S_1$, while introducing $R_1 = \frac{\psi_1}{\delta_1}$ and $R_2 = \frac{\psi_2}{\delta_2}$, assumed to be greater than $1$ under conditions~\eqref{eq:equation_2}-\eqref{eq:equation_3}, we obtain

\[
\hat{S}_1 = \left(1 - \frac{1}{R_2} \right)\frac{K}{\gamma} \left( \frac{R_2}{R_1} - (1 - \gamma) \right),
\]
from which we obtain
\[
\hat{S}_2 = K\left(1 - \dfrac{1}{R_2} \right) - \hat{S}_1,
\]
or equivalently,
\[
\hat{S}_2 = \left(1 - \frac{1}{R_2} \right)K\left[ 1 - \frac{1}{\gamma} \left( \frac{R_2}{R_1} - (1 - \gamma) \right) \right].
\]

Therefore, the coexistence equilibrium point is given by $E_3 = (\hat{S}_1, \hat{S}_2)$. 
For this point to exist, both coordinates must be positive. 
Hence, $\hat{S}_1 > 0$ if and only if $\frac{R_2}{R_1} > (1 - \gamma)$, 
and $\hat{S}_2 > 0$ if and only if $\frac{R_2}{R_1} < 1$. 
Consequently, the coexistence equilibrium $E_3$ exists if and only if
\[
1 - \gamma < \frac{R_2}{R_1} < 1.
\]

Next, we analyze the cases in which one of the populations goes extinct:

\begin{itemize}
    \item If $S_2 = 0$, equation~\eqref{eq:primeiraeq} reduces to
    \[
    S_1\psi_1 \left(1 - \dfrac{S_1}{K} \right) - \delta_1 S_1 = 0,
    \]
    which yields $S_1^* = 0$ or $S_1^* = K\left(1 - \dfrac{1}{R_1} \right)$, provided that $\psi_1 > \delta_1$.

    \item If $S_1 = 0$, from equation~\eqref{eq:sestrela} we obtain
    \[
    S_2^* = K\left(1 - \dfrac{1}{R_2} \right),
    \]
    with $\psi_2 > \delta_2$.
\end{itemize}

Therefore, the three equilibrium points of the system can be explicitly written in terms of the model parameters as follows:
\begin{itemize}
    \item $E_1 = \left(K\left(1 - \dfrac{1}{R_1} \right), 0\right)$,
    \item $E_2 = \left(0, K\left(1 - \dfrac{1}{R_2} \right)\right)$,
    \item $E_3 = (\hat{S}_1, \hat{S}_2)$, where
    \begin{align*}
        \hat{S}_1 &= \left(1 - \frac{1}{R_2} \right)\frac{K}{\gamma} \left( \frac{R_2}{R_1} - (1 - \gamma) \right), \\
        \hat{S}_2 &= \left(1 - \frac{1}{R_2} \right)K\left[ 1 - \frac{1}{\gamma} \left( \frac{R_2}{R_1} - (1 - \gamma) \right) \right].
    \end{align*}
\end{itemize}

We now proceed to analyze the local stability of the three equilibrium points obtained. For this purpose, we consider the Jacobian matrix of system \eqref{eq:equation_1}, which is given by \eqref{jacobi1}-\eqref{jacobi2}.

\begin{itemize}
    \item For the \textit{Wolbachia}-free equilibrium $E_1 = \left(K\left(1 - \dfrac{1}{R_1} \right), 0\right)$, we have:
$$
J(E_1) = \begin{bmatrix}
\delta_1 - \psi_1 & \delta_1(2-\gamma) - \psi_1 \\
0 & \dfrac{\psi_2}{R_1} - \delta_2
\end{bmatrix}.
$$
The eigenvalues of the characteristic polynomial $p(\lambda) = \det(J(E_1) - \lambda I)$ are given by
$\lambda_1 = \delta_1 - \psi_1$ and $\lambda_2 = \dfrac{\psi_2}{R_1} - \delta_2$.
Note that $\lambda_1 < 0$ due to the condition \eqref{eq:equation_2} imposed on the model. Therefore, for the equilibrium $E_1$ to be locally asymptotically stable, it is necessary that $\lambda_2 < 0$, which leads to the condition
$$
R_1 > R_2.
$$

\item For the wild-type-free equilibrium $E_2 = \left(0, K\left(1 - \dfrac{1}{R_2} \right)\right)$, we obtain:
$$
J(E_2) = \begin{bmatrix}
\psi_1(1-\gamma)\dfrac{1}{R_2} - \delta_1 & 0 \\
\delta_2 - \psi_2 & \delta_2 - \psi_2
\end{bmatrix}.
$$
In this case, the eigenvalues of the characteristic polynomial $p(\lambda) = \det(J(E_2) - \lambda I)$ are
$\lambda_1 = \psi_1(1-\gamma)\dfrac{1}{R_2} - \delta_1$ and $\lambda_2 = \delta_2 - \psi_2$.
As in the previous case, $\lambda_2 < 0$ by condition \eqref{eq:equation_2}. Hence, $E_2$ will be locally asymptotically stable if $\lambda_1 < 0$, that is, if
$$
R_2 > (1 - \gamma)R_1.
$$
\item To analyze the local stability of the coexistence equilibrium, we consider the trace and determinant of the Jacobian matrix 
\[
J(E_3) =
\begin{bmatrix}
\displaystyle \left.\frac{\partial f_1}{\partial S_1}\right|_{E_3} &
\displaystyle \left.\frac{\partial f_1}{\partial S_2}\right|_{E_3} \\[10pt]
\displaystyle \left.\frac{\partial f_2}{\partial S_1}\right|_{E_3} &
\displaystyle \left.\frac{\partial f_2}{\partial S_2}\right|_{E_3}
\end{bmatrix}.
\]

Let the equilibrium point be \(E_3 = (\hat{S}_1, \hat{S}_2)\), where
\begin{align*}
 \hat{S}_1 &= \left(1 - \frac{1}{R_2} \right)\frac{K}{\gamma} \left( \frac{R_2}{R_1} - (1 - \gamma) \right), \\
        \hat{S}_2 &= \left(1 - \frac{1}{R_2} \right)K\left[ 1 - \frac{1}{\gamma} \left( \frac{R_2}{R_1} - (1 - \gamma) \right) \right].
\end{align*}

Then, the entries of the Jacobian matrix evaluated at \(E_3\) are given by
\begin{align*}
\left.\frac{\partial f_1}{\partial S_1}\right|_{E_3} &= 
\psi_1\left(1 + \frac{c + \gamma - 1}{\gamma}\right)
- \frac{\psi_1}{\gamma}\left(1 - \frac{1}{R_2}\right)(c + \gamma - 1)(2 - c)
- \delta_1,  \\
\left.\frac{\partial f_1}{\partial S_2}\right|_{E_3} &= 
-\frac{\delta_2}{\psi_2}\frac{(c + \gamma - 1)^2}{\gamma}\psi_1
- \left(1 - \frac{1}{R_2}\right)\frac{(c + \gamma - 1)\psi_1}{\gamma},\\
\left.\frac{\partial f_2}{\partial S_1}\right|_{E_3} &= 
-\frac{(\psi_2 - \delta_2)(1 - c)}{\gamma}, \\[6pt]
\left.\frac{\partial f_2}{\partial S_2}\right|_{E_3} &=
(\psi_2 - \delta_2) - \frac{K}{\gamma}(\psi_2 - \delta_2)(1 + \gamma - c),
\end{align*}
where \(c = \dfrac{R_2}{R_1}\).

Hence, the determinant of the Jacobian matrix at \(E_3\) is
\[
\det(J(E_3)) = 
\frac{(\psi_2-\delta_2)
(\delta_1 \psi_2 - \delta_2 \psi_1)
(\delta_1 \psi_2 + \delta_2 \gamma \psi_1 - \delta_2 \psi_1)}{\delta_2\gamma \psi_1 \psi_2}.
\]

For the equilibrium \(E_3\) to be locally asymptotically stable, the following conditions must hold:
\[
\operatorname{tr}(J(E_3)) < 0 
\quad \text{and} \quad 
\det(J(E_3)) > 0.
\]

However, the existence of \(E_3\) requires that
\[
1 - \gamma < \dfrac{R_2}{R_1}< 1.
\]
Since \(\psi_2 > \delta_2\), we have
\begin{align*}
\delta_1 \psi_2 - \delta_2 \psi_1 &= \psi_1 \delta_2 (c - 1),\\
\delta_1\psi_2+\delta_2\gamma\psi_1 - \delta_2 \psi_1 &= \psi_1 \delta_2 (c + \gamma - 1).
\end{align*}
We can rewrite,
\[
\det(J(E_3)) = 
\frac{(\psi_2-\delta_2)(c-1)(c+\gamma-1)\psi_1\delta_2}{\gamma \psi_2} < 0,
\]
since \(c - 1 < 0\) by the existence condition of the equilibrium point.

Consequently, if the coexistence equilibrium \(E_3\) exists, then \(\det(J(E_3)) < 0\), and thus \(E_3\) is a saddle point (unstable).
\end{itemize}
\end{proof}

\section{Proof of Theorem \ref{thm:thm_3}}\label{appendix:C}

\begin{proof}
Let, 
\begin{equation*}
    B(t) = \dfrac{c \Bar{K} e^{r_2 t}}{ce^{r_2 t}-1},
\end{equation*}be the solution of 
\begin{eqnarray*}
    \dfrac{dB}{dt}(t) = B\psi_2\left(1 - \dfrac{B}{K}\right) - \delta_2 B 
\quad t \neq (n + l)\tau, \text{ } t \neq (n + 1)\tau,
\end{eqnarray*}the first equation of \eqref{eq:equation_10}, where $r_2 = \psi_2-\delta_2$, $\bar{K} = \frac{Kr_2}{\psi_2}$ and $c \in \mathbb{R}$ is a constant determined by the initial conditions of the problem. For $t = n\tau$ and $t = (n+l)\tau$, we denote the initial values at these times as $B(n\tau^+)$ and $B((n+l)\tau^+)$, respectively. The constant $c$ is then given by:
\begin{equation*}
c = \begin{cases}
\dfrac{B(n\tau^+) e^{-r_2 n\tau}}{B(n\tau^+) - \bar{K}}, & \text{for } t = n\tau, \\
\dfrac{B((n + l)\tau^+) e^{-r_2 (n + l)\tau}}{B((n + l)\tau^+) - \bar{K}}, & \text{for } t = (n + l)\tau.
\end{cases}
\end{equation*}
then,
\begin{align}\label{eq:equation_12}
    B(t) = \begin{cases}
        \dfrac{\bar{K} B(n\tau^+) e^{r_2 (t-n\tau)}}{B(n\tau^+)\left(e^{r_2 (t-n\tau)}     -1\right)+\bar{K}}, \quad  n\tau < t \leq (n+l)\tau,\\
        \dfrac{\bar{K} B((n+l)\tau^+) e^{r_2 (t-(n+l)\tau)}}{B((n+l)\tau^+)\left(e^{r_2 (t-(n+l)\tau)}     -1\right)+\bar{K}}, \quad  (n+l)\tau < t \leq (n+1)\tau,        
    \end{cases}
\end{align}
is the solution of system \eqref{eq:equation_9} between the pulses. At the pulse moments, when $t = (n+l)\tau$ with $n\geq 0$, the second equation of system \eqref{eq:equation_9} yields the following difference equation: 
\begin{eqnarray}\label{eq:pulse1}\nonumber
    B((n+l)\tau^+) =&(1-\varphi )B((n+l)\tau^-)\\
    = &(1-\varphi)\dfrac{\bar{K} B(n\tau^+) e^{r_2 ((n+l)\tau-n\tau)}}{B(n\tau^+)\left(e^{r_2 ((n+l)\tau-n\tau)}     -1\right)+\bar{K}}\\\nonumber
    = &(1-\varphi)\dfrac{\bar{K} B(n\tau^+) e^{r_2 l \tau}}{B(n\tau^+)\left(e^{r_2 l\tau}     -1\right)+\bar{K}}. \nonumber
\end{eqnarray} Similarly, at the next pulse time $t = (n+1)\tau$, $n \geq 0$, the third equation of system \eqref{eq:equation_9} gives the difference equation:
\begin{eqnarray}\label{eq:pulse2}\nonumber
    B((n+1)\tau^+) = & B((n+1)\tau^-) + u_n\\
    = &\dfrac{\bar{K} B((n+l)\tau^+) e^{r_2 ((n+1)\tau-(n+l)\tau)}}{B((n+l)\tau^+)\left(e^{r_2 ((n+1)\tau-(n+l)\tau)}     -1\right)+\bar{K}} + u_n\\\nonumber
    = &\dfrac{\bar{K} B((n+l)\tau^+) e^{r_2(1-l)\tau}}{B((n+l)\tau^+)\left(e^{r_2(1-l)\tau}     -1\right)+\bar{K}} + u_n. \nonumber
\end{eqnarray}
Substituting equation \eqref{eq:pulse1} into \eqref{eq:pulse2}, we obtain the recursive relation between consecutive pulse states:
\begin{align*}
    B((n+1)\tau^+) =& \dfrac{\bar{K} \left((1-\varphi)\dfrac{\bar{K} B(n\tau^+) e^{r_2 l \tau}}{B(n\tau^+)\left(e^{r_2 l\tau}     -1\right)+\bar{K}}\right) e^{r_2(1-l)\tau}}{\left((1-\varphi)\dfrac{\bar{K} B(n\tau^+) e^{r_2 l \tau)}}{B(n\tau^+)\left(e^{r_2 l\tau)}     -1\right)+\bar{K}}\right)\left(e^{r_2(1-l)\tau}     -1\right)+\bar{K}} + u_n,\\
    =&\frac{\bar{K}e^{r_2\tau}(1-\varphi)B(n\tau^+)}{B(n\tau^+)\left((1-\varphi)e^{r_2\tau} + \varphi e^{r_2l\tau}-1\right)+\bar{K}} +u_n.
\end{align*}
We can rewrite it as
\begin{equation}\label{eq:equation_15}B^{n+1} = h(B^n) := \frac{\bar{K}e^{r_2\tau}(1-\varphi)B^n}{B^n\left((1-\varphi)e^{r_2\tau} + \varphi e^{r_2l\tau}-1\right)+\bar{K}} + u_n,
\end{equation}
with $B^n := B(n\tau^+)$, which is the stroboscopic map of \eqref{eq:equation_9}. Set,
\begin{align*}
    h(B) = \frac{\bar{K}e^{r_2\tau}(1-\varphi)B}{B\left((1-\varphi)e^{r_2\tau} + \varphi e^{r_2l\tau}-1\right)+\bar{K}} +u_n,
\end{align*}Note that equation \eqref{eq:equation_15} has a single positive fixed point, since the equilibrium condition
\begin{align*}
h(B) = B,
\end{align*}
which determines the $\tau$-periodic fixed point of the recursive equation \eqref{eq:equation_15}, yields a unique positive solution through algebraic manipulation. Then,
\begin{equation}\label{eq:equation_16}
     A B^2 - \left(\bar{K}\left((1-\varphi)e^{r_2\tau} -1\right) + u_nA\right)B - u_n\bar{K} = 0,
\end{equation}where
\begin{equation}\label{eq:A_def}
    A := (1-\varphi)e^{r_2\tau} + \varphi e^{r_2 l \tau} - 1,
\end{equation}
and the discriminant \(\Delta\) is
\begin{equation*}\label{eq:Delta_def}
    \Delta := \big(\bar{K}\big((1-\varphi)e^{r_2\tau} - 1\big) + u_n A\big)^2 + 4 u_n \bar{K} A.
\end{equation*}
Since \(A> 0\) and \(\Delta>0\), the unique positive solution of \eqref{eq:equation_16} is
\begin{equation}\label{eq:B_plus}
    B^{+}
    \;=\;
    \dfrac{\bar{K}\big((1-\varphi)e^{r_2\tau} - 1\big) + u_n A + \sqrt{\big(\bar{K}\big((1-\varphi)e^{r_2\tau} - 1\big) + u_n A\big)^2 + 4 u_n \bar{K} A}}{2A}.
\end{equation}

Now, substituting $B^+$ into the first equation of \eqref{eq:equation_12}, we obtain  
\begin{align*}
    \Bar{B}(t) &= \dfrac{\bar{K} B^+ e^{r_2 (t-n\tau)}}{B^+\left(e^{r_2 (t-n\tau)}     -1\right)+\bar{K}}, \quad  n\tau < t \leq (n+l)\tau,
\end{align*}and for $(n + l)\tau < t \leq (n + 1)\tau$, we have \begin{align*}
    \Bar{B}(t) &= \dfrac{\bar{K} B^* e^{r_2 (t-(n+l)\tau)}}{B^*\left(e^{r_2 (t-(n+l)\tau)}     -1\right)+\bar{K}},
\end{align*}where $B^* = \dfrac{(1-\varphi)\bar{K}B^+e^{r_2l\tau}}{B^+\left(e^{r_2l\tau}-1\right)+\bar{K}}.$ Thus, the function 
\begin{align*}\label{eq:equation_18}
    \bar{B}(t) = \begin{cases}
        \dfrac{\bar{K} B^+e^{r_2 (t-n\tau)}}{B^+\left(e^{r_2 (t-n\tau)}     -1\right)+\bar{K}}, \quad  n\tau < t \leq (n+l)\tau,\\
        \dfrac{\bar{K} B^* e^{r_2 (t-(n+l)\tau)}}{B^*\left(e^{r_2 (t-(n+l)\tau)}     -1\right)+\bar{K}}, \quad  (n+l)\tau < t \leq (n+1)\tau,      \end{cases}
\end{align*}
defines the unique positive $\tau$-periodic solution of system \eqref{eq:equation_9}.

Furthermore, the function $h(B)$ is strictly increasing for $B > 0$, since it is composed of rational expressions with positive coefficients. Moreover,
\begin{equation*}
\lim_{B \to 0^+} h(B) = u_n > 0 \quad \text{and} \quad \lim_{B \to \infty} h(B) = \dfrac{\bar{K}e^{r_2\tau}(1-\varphi)}{(1-\varphi)e^{r_2\tau} + \varphi e^{r_2l\tau} - 1} + u_n.
\end{equation*}
Hence, $h(B)$ intersects the identity line $h(B) = B$ only once at $B = B^+$, and we conclude that:
\begin{align*}
h(B) < B &\quad \text{for all } B > B^+, \\
h(B) > B &\quad \text{for all } B < B^+.
\end{align*}
This implies that the sequence $\{B^n\}$ defined recursively by \eqref{eq:equation_15} converges monotonically to $B^+$ from any initial condition $B^0 > 0$. Thus, $B^+$ is globally asymptotically stable.
\begin{figure}[H]
\centering
\begin{tikzpicture}
\begin{axis}[
    axis lines = left,
    xlabel = $B$,
    ylabel = {$h(B)$},
    domain=0:10,
    samples=200,
    legend pos=north west,
    width=0.45\textwidth,   
    height=0.35\textwidth,  
    xtick={0,2,4,6,8,10},
    ytick={0,2,4,6,8,10},
    xmin=0, xmax=10,
    ymin=0, ymax=10,
    thick,
    grid=none
]

\def\a{4} \def\b{2.5} \def\c{3.5} \def\un{2}

\addplot[blue, thick, domain=0.1:10] {\a*\b*x / (x*\c + \a) + \un};
\addlegendentry{$h(B)$}

\addplot[red, dashed, domain=0:10] {x};
\addlegendentry{$h(B)=B$}

\pgfmathsetmacro{\A}{\c}
\pgfmathsetmacro{\Bcoeff}{\a*(1-\b) - \un*\c}
\pgfmathsetmacro{\Ccoeff}{- \un * \a}
\pgfmathsetmacro{\dis}{\Bcoeff*\Bcoeff - 4*\A*\Ccoeff}

\ifdim\dis pt<0pt
\else
  \pgfmathsetmacro{\xone}{(-\Bcoeff + sqrt(\dis))/(2*\A)}
  \pgfmathsetmacro{\xtwo}{(-\Bcoeff - sqrt(\dis))/(2*\A)}
  \pgfmathsetmacro{\zroot}{(\xone>\xtwo) ? \xone : \xtwo}
  \pgfmathsetmacro{\zroot}{(\zroot>0) ? \zroot : ((\xone>0)?\xone:\xtwo)}
  \addplot[only marks, mark=*, mark size=1.8pt] coordinates {(\zroot,\zroot)};
  \node[above right] at (axis cs:\zroot,\zroot) {$B^+$};
\fi
\end{axis}
\end{tikzpicture}
\caption{Illustration of the stroboscopic map $h(B)$ and the identity line $h(B)=B$. The unique intersection point $B^+$ is globally attractive.}

\end{figure}
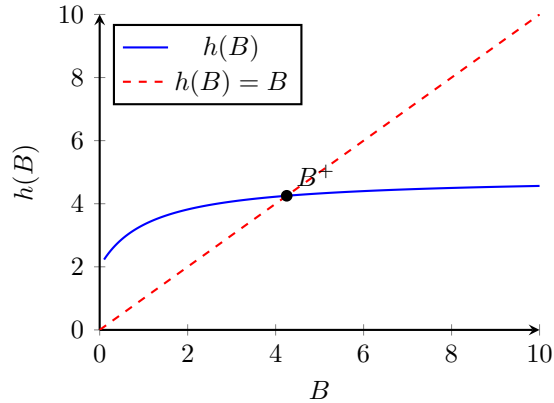
Since $B^+$ is the unique globally attractive fixed point of the stroboscopic map, and the solution $\bar{B}(t)$ is constructed from it, we conclude that any solution of \eqref{eq:equation_9} with $B(0^+) > 0$ converges to $\bar{B}(t)$ as $t \to \infty$.
\end{proof}

\section*{Acknowledgments}
The authors gratefully acknowledge the late Professor Sergio Muniz Oliva Filho for his guidance and support in previous works, which were fundamental to the development of the present study.

This project receives financial support from CAPES through STIC AMSUD (88881.878875/2023-01). JCSA thanks CAPES (Finance Code 001) for the scholarship. 
CES acknowledges FEEI-CONACYT-BIOCIVIP-AMSU99-7 and ARASY-ESTR01-23.


\bibliographystyle{unsrt}  
\bibliography{bibliografia.bib}

\end{document}